\DeclareMathAlphabet{\mathcal}{OMS}{cmsy}{m}{n}
\theoremstyle{plain} 
\newtheorem{theorem}{Theorem}[section]
\newtheorem{lemma}[theorem]{Lemma}
\newtheorem{corollary}[theorem]{Corollary}
\theoremstyle{definition} 
\newtheorem{definition}{Definition}[section]
\newtheorem{assumption}{Assumption}
\theoremstyle{remark} 
\newcommand{\svs}{\vspace{0.7mm}}
\newcommand{\vs}{\vspace{1.5mm}}
\newcommand{\G}{\mathbb{G}}
\newcommand{\Z}{\mathbb{Z}}
\newcommand{\bits}{\{0,1\}}
\newcommand{\mc}[1]{\mathcal{#1}}
\newcommand{\tb}[1]{\textbf{#1}}
\newcommand{\lb}{\linebreak[0]}
\newcommand{\Adv}{\textbf{Adv}}
\title{Adaptively Secure Distributed Broadcast Encryption\\ with Linear-Size
Public Parameters}
\author{
    Kwangsu Lee\footnote{Sejong University, Seoul, Korea. Email:
	\texttt{kwangsu@sejong.ac.kr}.}
}
\date{}
\begin{document}

\maketitle

\begin{abstract}
Distributed broadcast encryption (DBE) is a variant of broadcast encryption
(BE) that can efficiently transmit a message to a subset of users, in which
users independently generate user private keys and user public keys instead of
a central trusted authority generating user keys. In this paper, we propose a
DBE scheme with constant size ciphertexts, constant size private keys, and
linear size public parameters, and prove the adaptive security of our DBE
scheme under static assumptions in composite-order bilinear groups. The
previous efficient DBE schemes with constant size ciphertexts and constant size
private keys are proven secure under the $q$-Type assumption or have a drawback
of having quadratic size public parameters. In contrast, our DBE scheme is the
first DBE scheme with linear size public parameters proven adaptively secure
under static assumptions in composite-order bilinear groups. 
\end{abstract}

\vs \noindent {\bf Keywords:} Broadcast encryption, Distributed broadcast
encryption, Adaptive security, Bilinear maps.

\newpage

\section{Introduction}

Broadcast encryption (BE) is a special kind of an encryption mechanism in
which a ciphertext is associated with a set of recipients, and a user
belonging to the set of recipients can decrypt the ciphertext with their own
private key \cite{FiatN93}. A non-trivial BE scheme must have sublinear size
ciphertexts since a trivial BE scheme with linear size ciphertexts can be
easily constructed by simply concatenating ciphertexts of public-key
encryption (PKE). Many public-key BE schemes with constant size ciphertexts
that allow anyone to create a ciphertext have been proposed \cite{BonehGW05,
GentryW09,Waters09}. However, the biggest drawback of existing efficient BE
schemes is that a central trusted authority is required to generate private
keys of users. In PKE, a central trusted authority is not needed for key
generation because individual users independently generate private keys and
public keys. The need for a central trusted authority is an obstacle that
hinders the application of BE schemes to decentralized environments that have
recently been attracting attention, such as blockchains.

Distributed broadcast encryption (DBE) is a variant of BE in which users can
independently generate their own private and public keys, and there is no need
of a central trusted authority for key generation \cite{WuQZD10,BonehZ14}. As
a result, the encryption and decryption algorithms of DBE require the public
keys of recipients in addition to public parameters, which increases the
storage size for storing the public keys of users. Previously, many DBE
schemes have been proposed by using bilinear pairing, indistinguishability
obfuscation, and lattices \cite{WuQZD10,BonehZ14,KolonelosMW23,ChampionW24}.
In reality, the most efficient DBE schemes are those designed in pairing
groups, which have $O(1)$ ciphertext size, $O(1)$ user private key size,
$O(L)$ user public key size, and $O(L)$ or $O(L^2)$ public parameters size
where $L$ is the number of users \cite{KolonelosMW23}. In addition, these
efficient DBE schemes provide adaptive security that allows an attacker to
select the target set for attacks in the challenge phase.

The most efficient DBE scheme is the KMW-DBE scheme with linear size public
parameters which is derived from the BGW-BE scheme in prime-order bilinear
groups, which has been proven to be adaptive secure under the $q$-Type
assumption \cite{KolonelosMW23}. In general, the $q$-Type assumption has been
widely used to prove the security of efficient BE schemes, but it has a
disadvantage that the security decreases as the parameter $q$ increases where
$q$ is dependent on the number of private keys. In this paper, we ask whether
it is possible to construct a DBE scheme with linear size public parameters
and prove the adaptive security under static assumptions instead of the $q$-Type
assumption.

\subsection{Our Contributions}

In this paper, we first propose an efficient DBE scheme with linear-size
public parameters and prove its semi-static security based on static
assumptions in composite-order bilinear groups. The semi-static security model
which was introduced by Gentry and Waters \cite{GentryW09} is weaker than the
adaptive security model where an attacker selects the target set in the
challenge phase, but stronger than the static security model where the
attacker must submit the target set in the initial phase \cite{BonehGW05}.
However, a semi-statically secure DBE scheme can be converted into an
adaptively secure DBE scheme by using the conversion method proposed by Gentry
and Waters \cite{GentryW09,KolonelosMW23}. Thus, we obtain the first DBE
scheme that has constant size ciphertexts, constant size user private keys,
linear size user public key, and linear-size public parameters, and prove the
adaptive security of our DBE scheme under static assumptions in
composite-order bilinear groups. The comparison of our DBE scheme with the
previous BE and DBE schemes is given in Table \ref{tab:comp-be-dbe}. 

The basic idea of designing a DBE scheme is to decentralize the private key
generation process of a BE scheme \cite{KolonelosMW23}. The most efficient BE
scheme is the BGW-BE scheme because it has linear size public parameters,
constant size private keys, and constant size ciphertexts, but the static
security of this scheme can be only proven under the $q$-Type assumption
\cite{BonehGW05}. The KMW-DBE scheme is an efficient DBE scheme derived from
the BGW-BE scheme by decentralizing the private key generation process, but
its semi-static security can be only proven under the $q$-Type assumption by
using the partitioning technique \cite{KolonelosMW23}. Our DBE scheme is a
modification of the KMW-DBE scheme to use composite-order bilinear groups
instead of prime-order bilinear groups. To prove the semi-static security of
our DBE scheme under static assumptions instead of the $q$-Type assumption, we
use the dual system encryption technique and its variant, which were widely
used in the proofs of existing identity-based encryption (IBE), hierarchical
IBE (HIBE), and attribute-based encryption (ABE) schemes
\cite{Waters09,LewkoW10,ChaseM14,Wee16}.

\begin{table*}[t]
\caption{Comparison of broadcast encryption schemes in bilinear groups}
\label{tab:comp-be-dbe}
\vs \small \addtolength{\tabcolsep}{8.2pt}
\renewcommand{\arraystretch}{1.4}
\newcommand{\otoprule}{\midrule[0.09em]}
\begin{tabularx}{6.50in}{lccccccc}
\toprule
Scheme  & Type & PP  & USK  & UPK  & CT  & Model & Assumption \\
\otoprule
BGW \cite{BonehGW05}
	& BE & $O(L)$ & $O(1)$ & - & $O(1)$     & ST & $q$-Type \\
AKN \cite{AbdallaKN07}
    & BE & $O(L)$ & $O(L)$ & - & $O(1)$     & ST & $q$-Type \\
GW \cite{GentryW09}
    & BE & $O(L)$ & $O(L)$ & - & $O(1)$     & AD & $q$-Type \\
GW \cite{GentryW09}
    & BE & $O(L)$ & $O(1)$ & - & $O(1)$     & AD & $q$-Type \\
Waters \cite{Waters09}
    & BE & $O(L)$ & $O(L)$ & - & $O(1)$     & AD & DLIN \\
Wee \cite{Wee16}
    & BE & $O(L)$ & $O(1)$ & - & $O(1)$     & ST & SD, GSD \\
GKW \cite{GayKW18}
    & BE & $O(L^2)$ & $O(1)$ & - & $O(1)$   & AD & SD, GSD \\
GKW \cite{GayKW18}
    & BE & $O(L^2)$ & $O(1)$ & - & $O(1)$   & AD & $k$-LIN \\
HWW \cite{HsiehWW25}
    & BE & $O(L)$ & $O(1)$ & - & $O(1)$  	& AD & $q$-Type \\
\midrule[0.03em]
WQZD \cite{WuQZD10}
    & DBE & $O(L)$ & $O(L)$ & $O(L^2)$ & $O(1)$	& AD & $q$-Type \\
KMW \cite{KolonelosMW23}
    & DBE & $O(L)$ & $O(1)$ & $O(L)$ & $O(1)$   & AD & $q$-Type \\
KMW \cite{KolonelosMW23}
    & DBE & $O(L^2)$ & $O(1)$ & $O(L)$ & $O(1)$ & AD & $k$-LIN \\
\midrule[0.05em]
Ours & DBE & $O(L)$ & $O(1)$ & $O(L)$ & $O(1)$   & AD & SD, GSD \\
\bottomrule
\multicolumn{8}{p{6.10in}}{
Let $L$ be the number of all users.
We count the number of group elements to measure the size.
We use symbols ST for static security and AD for adaptive security.
}
\end{tabularx}
\end{table*}

\subsection{Related Work}

\tb{Broadcast Encryption.} The concept of broadcast encryption (BE), which can
securely transmit a message to a subset of users, was introduced by Fiat and
Naor \cite{FiatN93}. Naor et al. proposed symmetric-key BE schemes by using
the subset cover framework and showed that their schemes provide collusion
resistance security \cite{NaorNL01}. The ciphertexts of symmetric-key BE can
only be created by a central trusted authority, but the ciphertexts of
public-key BE can be created by anyone. Boneh et al. proposed the first
public-key BE scheme which has constant size ciphertexts and constant size
private keys in prime-order bilinear groups and proved its static security
under the $q$-Type assumption \cite{BonehGW05}. Abdalla et al. showed that it
is possible to convert an HIBE scheme with the private key delegation property
into a BE scheme and proposed an efficient BE scheme with constant size
ciphertexts and linear size private keys based on the BBG-HIBE scheme
\cite{AbdallaKN07}. The ideal security model of BE is an adaptive security
model in which an attacker selects the target subsets in the challenge phase.
Gentry and Waters presented a conversion method that transform a
semi-statically secure BE scheme into an adaptively secure BE scheme and
proposed an adaptively secure identity-based BE scheme \cite{GentryW09}.
Waters proposed a BE scheme with constant size private keys and proved its
adaptive security by using the dual system encryption technique under the
standard assumption \cite{Waters09}. Gay et al. presented an efficient BE
scheme with square size public parameters, constant size ciphertexts, and
constant size private keys, and proved its adaptive security under standard
assumptions \cite{GayKW18}. Hsieh et al. presented another BE scheme which
reduces the square size public parameters of the GKW-BE scheme to linear size
public parameters by compressing the public parameters \cite{HsiehWW25}.

\svs\noindent \tb{Distributed Broadcast Encryption.} While BE requires a central
trusted authority that generates users' private keys, distributed BE (DBE)
does not require the central trusted authority since it allows users to
independently generate their own private and public keys \cite{WuQZD10,
BonehZ14}. Wu et al. proposed the concept of Ad hoc broadcast encryption
(AHBE) that does not require a central trusted authority and proposed an AHBE
scheme with relatively large size private and public keys \cite{WuQZD10}.
Boneh and Zhandry proposed the concept of DBE and showed that the most
efficient DBE scheme can be constructed by using indistinguishability
obfuscation \cite{BonehZ14}. Kolonelos et al. showed that it is possible to
convert existing BGW-BE and GKW-BE schemes in bilinear groups into DBE schemes
and proposed an efficient DBE scheme that provides the adaptive security using
DSE technique under the standard assumption \cite{KolonelosMW23}. The GW
transformation that converts a semi-statically secure BE scheme into an
adaptive secure BE scheme is equally applicable to DBE schemes
\cite{KolonelosMW23}. Recently, Champion and Wu proposed the first DBE scheme
based on lattices and proved its security under the modified LWE assumption
\cite{ChampionW24}. Garg et al. introduced the concept of flexible broadcast
encryption (FBE) that does not require to specify a user index when generating
a user secret key and proposed a conversion method to convert a DBE scheme
into an FBE scheme \cite{GargLWW23}. We may view DBE is a special case of
silent threshold encryption (STE) that supports distributed private key
generation, and Garg et al. proposed a secure and efficient STE scheme in the
generic group model and showed that an efficient DBE scheme can be derived
from their STE scheme \cite{GargKPW24}.

\svs\noindent \tb{Registration-Based Encryption.} Identity-based encryption
(IBE) is a variant of public-key encryption in which the public key of a user
is replaced by an identity string, and it requires a trusted authority to
generate a private key corresponding to the identity of a user
\cite{BonehF01}. Registration-based encryption (RBE) is an extension of IBE
that replaces the trusted authority with a key curator who simply registers
public keys of users without knowledge of any secret keys \cite{GargHMR18}.
Recently, the concept of registered attribute-based encryption (Reg-ABE) was
also introduced by applying RBE to attribute-based encryption (ABE), and an
efficient Reg-ABE scheme in bilinear groups was proposed
\cite{HohenbergerLWW23}. Since ABE can play the role of BE, a Reg-ABE scheme
can be naturally converted to a DBE scheme. Many Reg-ABE schemes have been
proposed in bilinear groups and lattices \cite{ZhuZGQ23,AttrapadungT24,
GargLWW24,ChampionHW25}.

\section{Preliminaries}

In this section, we define symmetric-key encryption, the bilinear groups of
composite-order, and complexity assumptions.

\subsection{Symmetric Key Encryption}

\begin{definition}[Symmetric Key Encryption]
A symmetric key encryption (SKE) scheme consists of three algorithms 
$\tb{GenKey}, \tb{Encrypt}$, and $\tb{Decrypt}$, which are defined as follows:
\begin{description}
\item $\tb{GenKey}(1^\lambda)$: The key generation algorithm takes as input a 
security parameter $\lambda$. It outputs a symmetric key $K$.

\item $\tb{Encrypt}(K, M)$: The encryption algorithm takes as input a 
symmetric key $K$ and a message $M$. It outputs a ciphertext $C$.

\item $\tb{Decrypt}(K, C)$: The decryption algorithm takes as input a
symmetric key $K$ and a ciphertext $C$. It outputs a message $M$ or a special
symbol $\perp$.
\end{description}
The correctness property of SKE is defined as follows: For all $K$ generated
by $\tb{GenKey}(1^\lambda)$ and any message $M$, it is required that
$\tb{Decrypt} (K, \tb{Encrypt}(K, M)) = M$. 
\end{definition}

\begin{definition}[One-Message Indistinguishability] \label{def:ske-omi-sec}
The one-message indistinguishability (OMI) of SKE is defined in terms of the 
following experiment between a challenger $\mc{C}$ and a PPT adversary $\mc{A}$ 
where $1^\lambda$ is given as input:
\begin{enumerate}
\item \tb{Setup}: $\mc{C}$ obtains a symmetric key $K$ by running $\tb{GenKey} 
(1^{\lambda})$ and keeps $K$ to itself.

\item \tb{Challenge}: $\mc{A}$ submits challenge messages $M_0^*, M_1^*$ where
$|M_0^*| = |M_1^*|$. $\mc{C}$ flips a random coin $\mu \in \bits$ and obtains
$CT^*$ by running $\tb{Encrypt}(K, M_\mu^*)$. It gives $CT^*$ to $\mc{A}$.

\item \tb{Guess}: $\mc{A}$ outputs a guess $\mu' \in \bits$. $\mc{C}$ outputs
$1$ if $\mu = \mu'$ or $0$ otherwise.
\end{enumerate}
The advantage of $\mc{A}$ is defined as $\Adv_{SKE,\mc{A}}^{OMI} (\lambda) =
\big| \Pr[\mu = \mu'] - \frac{1}{2} \big|$ where the probability is taken over
all the randomness of the experiment. An SKE scheme is OMI secure if for all
probabilistic polynomial-time (PPT) adversary $\mc{A}$, the advantage of
$\mc{A}$ is negligible in the security parameter $\lambda$.
\end{definition}

\subsection{Bilinear Groups of Composite Order}

Let $N=p_1 p_2 p_3$ where $p_1, p_2$, and $p_3$ are distinct prime numbers.
Let $\G$ and $\G_T$ be two multiplicative cyclic groups of same composite
order $N$ and $g$ be a generator of $\G$. The bilinear map $e:\G \times \G
\rightarrow \G_{T}$ has the following properties:
\begin{enumerate}
\item Bilinearity: $\forall u, v \in \G$ and $\forall a,b \in \Z_N$,
    $e(u^a,v^b)=e(u,v)^{ab}$.
\item Non-degeneracy: $\exists g$ such that $e(g,g)$ has order $N$, that
    is, $e(g,g)$ is a generator of $\G_T$.
\end{enumerate}
We say that $\G$ is a bilinear group if the group operations in $\G$ and
$\G_T$ as well as the bilinear map $e$ are all efficiently computable.
Furthermore, we assume that the description of $\G$ and $\G_T$ includes
generators of $\G$ and $\G_T$ respectively. We use the notation $\G_{p_i}$ to
denote the subgroups of order $p_i$ of $\G$ respectively. Similarly, we use
the notation $\G_{T,p_i}$ to denote the subgroups of order $p_i$ of $\G_T$
respectively.
We note that if $h_i \in \G_{p_i}$ and $h_j \in \G_{p_j}$ for $i \neq j$,
then $e(h_i, h_j)$ is the identity element in $\G_T$. This orthogonality
property of $\G_{p_1}, \G_{p_2}, \G_{p_3}$ will be used to implement
semi-functionality in our constructions.

\subsection{Complexity Assumptions}

\begin{assumption}[Subgroup Decision, SD]
Let $(N, \G, \G_T, e)$ be a description of the bilinear group of composite
order $N=p_1 p_2 p_3$. Let $g_1, g_2, g_3$ be generators of subgroups
$\G_{p_1}, \G_{p_2}, \G_{p_3}$ respectively. The SD assumption is that if the
challenge tuple
    $$D = ((N, \G, \G_T, e), g_1, g_3) \mbox{ and } Z$$
are given, no PPT algorithm $\mc{A}$ can distinguish $Z = Z_0 = X_1 \in
\G_{p_1}$ from $Z = Z_1 = X_1 R_1 \in \G_{p_1 p_2}$ with more than a
negligible advantage. The advantage of $\mc{A}$ is defined as
    $\Adv_{\mc{A}}^{SD}(\lambda) = \big| \Pr[\mc{A}(D,Z_0) = 0] -
    \Pr[\mc{A}(D,Z_1) = 0] \big|$
where the probability is taken over random choices of $X_1 \in \G_{p_1}$ and
$R_1 \in \G_{p_2}$.
\end{assumption}

\begin{assumption}[General Subgroup Decision, GSD]
Let $(N, \G, \G_T, e)$ be a description of the bilinear group of composite
order $N=p_1 p_2 p_3$. Let $g_1, g_2, g_3$ be generators of subgroups
$\G_{p_1}, \G_{p_2}, \G_{p_3}$ respectively. The GSD assumption is that if
the challenge tuple
    $$D = ((N, \G, \G_T, e), g_1, g_3, X_1 R_1, R_2 Y_1) \mbox{ and } Z$$
are given, no PPT algorithm $\mc{A}$ can distinguish $Z = Z_0 = X_2 Y_2 \in
\G_{p_1 p_3}$ from $Z = Z_1 = X_2 R_3 Y_2 \in \G_{p_1 p_2 p_3}$ with more
than a negligible advantage. The advantage of $\mc{B}$ is defined as
    $\Adv_{\mc{A}}^{GSD} (\lambda) = \big| \Pr[\mc{A}(D,Z_0) = 0] -
    \Pr[\mc{A}(D,Z_1) = 0] \big|$
where the probability is taken over random choices of $X_1, X_2 \in
\G_{p_1}$, $R_1, R_2, R_3 \in \G_{p_2}$, and $Y_1, Y_2 \in \G_{p_3}$.
\end{assumption}

\section{Distributed Broadcast Encryption}

In this section, we define the syntax of DBE and its security models.

\subsection{Definition}

In a DBE scheme, a trusted authority generates public parameters to be used in
the system by running the setup algorithm. Each user generates a user private
key and a user public key for the user’s index by running the key generation
algorithm with the public parameters as input and stores the user public key
in a public directory. Then, a sender creates a ciphertext for a subset of
users by running the encryption algorithm with the receivers' public keys and
public parameters as input. A receiver can decrypt the ciphertext using his
private key if its index belongs to the subset of the ciphertext. A more
detailed syntax of the DBE scheme is given as follows.

\begin{definition}[Distributed Broadcast Encryption]
A distributed broadcast encryption (DBE) scheme consists of five algorithms
$\tb{Setup}, \tb{GenKey}, \tb{IsValid}, \tb{Encaps}$, and $\tb{Decaps}$,
which are defined as follows:
\begin{description}
\item $\tb{Setup}(1^\lambda, 1^L)$: The setup algorithm takes as input a
security parameter $1^\lambda$, and the number users $L$. It outputs public
parameters $PP$.

\item $\tb{GenKey}(i, PP)$: The key generation algorithm takes as input a user
index $i \in [L]$ and public parameters $PP$. It outputs a private key $USK_i$
and a public key $UPK_i$.

\item $\tb{IsValid}(j, UPK_j, PP)$: The public key verification algorithm
takes as input an index $j$, a public key $UPK_j$, and the public parameters
$PP$. It outputs $1$ or $0$ depending on the validity of keys.

\item $\tb{Encaps}(S, \{ (j, UPK_j) \}_{j \in S}, PP)$: The encapsulation
algorithm takes as input a set $S \subseteq [L]$, public keys $\{ (j, UPK_j)
\}_{j \in S}$, and public parameters $PP$. It outputs a ciphertext header $CH$ 
and a session key $CK$.

\item $\tb{Decaps}(S, CH, i, USK_i, \{ (j, UPK_j) \}_{j \in S}, PP)$: The
decapsulation algorithm takes as input a set $S$, a ciphertext header $CH$, an
index $i$, a private key $USK_i$ for the index $i$, public keys $\{ (j, UPK_j)
\}_{j \in S}$, and public parameters $PP$. It outputs a session key $CK$ or
$\perp$.
\end{description}
The correctness of DBE is defined as follows: For all $PP$ generated by
$\tb{Setup}(1^{\lambda}, 1^L)$, all $(USK_i, UPK_i)$ generated by $\tb{GenKey}
(i, PP)$, all $UPK_j$ such that $\tb{IsValid}(j, UPK_j, PP)$, all $S \subseteq 
[L]$, it is required that
\begin{itemize}
\item If $i \in S$, then $CK = CK'$ where $(CH, CK) = \tb{Encaps}(S, \{ (j, 
UPK_j) \}_{j \in S}, PP)$ and $CK' = \tb{Decaps}(S, \lb CH, i, USK_i, \{ (j, 
UPK_j) \}_{j \in S}, PP)$.
\end{itemize}
\end{definition}

\subsection{Security Model}

The semi-static security model is an enhanced security model of the static
security model in which an attacker specifies the challenge set $S^*$ before
it sees the public parameters \cite{GentryW09}. In the semi-static security
model, an attacker first commits an initial set $\tilde{S}$, and a challenger
generates public parameter $PP$ and gives it to the attacker. Afterwards, the
attacker can obtain the public keys of users belonging to $\tilde{S}$. In the
challenge phase, the attacker submits the challenge set $S^*$, which is a
subset of $\tilde{S}$, and obtains the challenge ciphertext header $CH^*$ and
the challenge session key $CK_\mu^*$. Finally, the attacker succeeds if he can
guess whether the challenge session key is correct or random. The detailed
description of this security model is given as follows:

\begin{definition}[Semi-Static Security] \label{def:dbe-ss-sec}
The semi-static security of DBE is defined in terms of the following 
experiment between a challenger $\mc{C}$ and a PPT adversary $\mc{A}$ where
$1^\lambda$ and $1^L$ are given as input:
\begin{enumerate}
\item \tb{Init}: $\mc{A}$ initially commits an initial set $\tilde{S}
\subseteq [L]$.

\item \tb{Setup}: $\mc{C}$ obtains public parameters $PP$ by running 
$\tb{Setup}(1^{\lambda}, 1^L)$ and gives $PP$ to $\mc{A}$.

\item \tb{Query Phase}: $\mc{C}$ generates a key pair $(USK_j, UPK_j)$ by
running $\tb{GenKey} (j, PP)$ for all $j \in \tilde{S}$. It gives $\{ (j,
UPK_j) \}_{j \in \tilde{S}}$ to $\mc{A}$.

\item \tb{Challenge}: $\mc{A}$ submits a challenge set $S^* \subseteq 
\tilde{S}$. $\mc{C}$ obtains a ciphertext tuple $(CH^*, CK^*)$ by running
$\tb{Encaps} (S^*, \{ (j, UPK_j) \}_{j \in S^*}, PP)$. It sets $CK_0^* =
CK^*$ and $CK_1^* = RK$ by selecting a random $RK$. It flips a random coin 
$\mu \in \bits$ and gives $(CH^*, CK_\mu^*)$ to $\mc{A}$.

\item \tb{Guess}: Finally, $\mc{A}$ outputs a guess $\mu' \in \bits$, and wins
the game if $\mu = \mu'$.
\end{enumerate}
The advantage of $\mc{A}$ is defined as $\Adv_{DBE,\mc{A}}^{SS} (\lambda) = 
\big| \Pr[\mu = \mu'] - \frac{1}{2} \big|$ where the probability is taken over
all the randomness of the experiment. A DBE scheme is semi-statically secure
if for all probabilistic polynomial-time (PPT) adversary $\mc{A}$, the
advantage of $\mc{A}$ is negligible in the security parameter $\lambda$.
\end{definition}

The adaptive security model is the strongest security model of BE in which an 
attacker can specify the challenge set $S^*$ in the challenge phase
\cite{GentryW09}. In the adaptive security model, a challenger first generates
public parameter $PP$ and gives it to an attacker. Then, the attacker requests
a key generation query for a user index to obtain the user's public key, and a
key reveal query for a user index to obtain the user's private key. In the
challenge phase, the attacker submits the challenge set $S^*$ that does not
include the user index in key exposure queries and obtains the challenge
ciphertext header $CH^*$ and the challenge session key $CK_\mu^*$. Finally,
the attacker succeeds if he can guess whether the challenge session key is
correct or random.

\begin{definition}[Adaptive Security] \label{def:dbe-ad-sec}
The adaptive security of DBE is defined in terms of the following
experiment between a challenger $\mc{C}$ and a PPT adversary $\mc{A}$ where
$1^\lambda$ and $1^L$ are given as input:
\begin{enumerate}
\item \tb{Setup}: $\mc{C}$ obtains public parameters $PP$ by running 
$\tb{Setup}(1^{\lambda}, 1^L)$ and gives $PP$ to $\mc{A}$.

\item \tb{Query Phase}: $\mc{A}$ adaptively requests key generation and key
corruption queries. These queries are processed as follows:
    \begin{itemize}
	\item Key Generation: $\mc{A}$ issues this query on an index $i \in [L]$
	such that $i \not\in KQ$. $\mc{C}$ creates $(USK_i, UPK_i)$ by running 
	$\tb{GenKey}(i, PP)$, adds $i$ to $KQ$, and responds $UPK_i$ to $\mc{A}$.

	\item Key Corruption: $\mc{A}$ issues this query on an index $i \in [L]$
	such that $i \in KQ \setminus CQ$. $\mc{C}$ adds $i$ to $CQ$ and 
	responds $USK_i$ to $\mc{A}$. 
	\end{itemize}

\item \tb{Challenge}: $\mc{A}$ submits a challenge set $S^* \subseteq KQ
\setminus CQ$. $\mc{C}$ obtains a ciphertext tuple $(CH^*, CK^*)$ by running
$\tb{Encaps} (S^*, \{ (j, UPK_j) \}_{j \in S^*}, PP)$.  It sets $CK_0^* =
CK^*$ and $CK_1^* = RK$ by selecting a random $RK$. It flips a random coin 
$\mu \in \bits$ and
gives $(CH^*, CK_\mu^*)$ to $\mc{A}$.

\item \tb{Guess}: Finally, $\mc{A}$ outputs a guess $\mu' \in \bits$, and
wins the game if $\mu = \mu'$.
\end{enumerate}
The advantage of $\mc{A}$ is defined as $\Adv_{DBE,\mc{A}}^{AD} (\lambda) =
\big| \Pr[\mu = \mu'] - \frac{1}{2} \big|$ where the probability is taken over
all the randomness of the experiment. A DBE scheme is adaptively secure if for
all probabilistic polynomial-time (PPT) adversary $\mc{A}$, the advantage of
$\mc{A}$ is negligible in the security parameter $\lambda$.
\end{definition}

\begin{lemma}[\cite{GentryW09,KolonelosMW23}] \label{lem:conv-ss-to-ad}
Let $\Pi_{SS}$ be a semi-statically secure DBE scheme. Then there exists 
$\Pi_{AD}$ that is an adaptively secure DBE scheme.
\end{lemma}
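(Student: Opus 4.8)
The plan is to prove the lemma constructively, by exhibiting the Gentry--Waters ``doubling'' transformation \cite{GentryW09} that turns $\Pi_{SS}$ into an adaptively secure scheme $\Pi_{AD}$, and then reducing the adaptive security of $\Pi_{AD}$ to the semi-static security of $\Pi_{SS}$. Concretely, $\Pi_{AD}$ for $L$ users runs $\Pi_{SS}$ over the doubled index set $[L] \times \bits$, so that each user $i$ owns two ``slots'' $(i,0)$ and $(i,1)$. I would define $\tb{Setup}(1^\lambda,1^L)$ to output $PP \leftarrow \Pi_{SS}.\tb{Setup}(1^\lambda,1^{2L})$; $\tb{GenKey}(i,PP)$ to run $\Pi_{SS}.\tb{GenKey}$ on both $(i,0)$ and $(i,1)$, setting $USK_i=(usk_{i,0},usk_{i,1})$ and $UPK_i=(upk_{i,0},upk_{i,1})$; and $\tb{IsValid}$ to verify both component keys. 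For $\tb{Encaps}(S,\cdot)$ the sender samples a fresh bit $t_i \in \bits$ for each $i \in S$, forms the ``fingerprinted'' set $\hat{S}=\{(i,t_i):i\in S\}$, runs $\Pi_{SS}.\tb{Encaps}$ on $\hat S$ using the selected slot keys, and appends the vector $(t_i)_{i\in S}$ to the header; $\tb{Decaps}$ then lets $i$ read $t_i$ and decrypt with $usk_{i,t_i}$.

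Correctness is immediate from that of $\Pi_{SS}$: a recipient $i \in S$ always holds the slot key $usk_{i,t_i}$ selected by the sender, so $\Pi_{SS}.\tb{Decaps}$ recovers the same session key. For adaptive security I would build, from an adaptive adversary $\mc{A}$ against $\Pi_{AD}$, a semi-static adversary $\mc{B}$ against $\Pi_{SS}$. At the \tb{Init} step $\mc{B}$ samples random bits $\gamma_i$ and commits a superset $\tilde{S}=\{(i,\gamma_i):i\in[L]\}$ of doubled indices, i.e.\ one committed slot per user; it forwards the $\Pi_{SS}$ parameters as $PP$, obtains from its own challenger the public keys $\{upk_{i,\gamma_i}\}$, and generates the complementary slots $(i,1-\gamma_i)$ itself so that it knows each $usk_{i,1-\gamma_i}$. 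A $\tb{GenKey}$ query on $i$ is answered with $UPK_i=(upk_{i,0},upk_{i,1})$, assembled from the committed and the self-generated halves. When $\mc{A}$ outputs a challenge set $S^*$, $\mc{B}$ forwards $\{(i,\gamma_i):i\in S^*\}\subseteq\tilde S$ to its semi-static challenger and repackages the returned header by setting the fingerprint bits $t_i^*=\gamma_i$, passes $CK_\mu^*$ to $\mc{A}$, and echoes $\mc{A}$'s final guess. Because $\mc{B}$ dictates the challenge fingerprint, the embedded challenge always lands inside the committed set, so the challenge side of the simulation incurs no loss.

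The main obstacle is answering \tb{Key Corruption} queries: a corrupted user's key $USK_i=(usk_{i,0},usk_{i,1})$ contains the committed slot $usk_{i,\gamma_i}$, which $\mc{B}$ deliberately does not know, since that slot lies in $\tilde S$ and the semi-static game never reveals private keys. The resolution, and the technical heart of the proof, exploits that the challenge and corruption sets are disjoint ($S^*\subseteq KQ\setminus CQ$): $\mc{B}$ should commit a slot of user $i$ to $\tilde S$ only for users it expects to appear in the challenge, and self-generate \emph{both} slots of users it expects to be corrupted, so that corruptions remain answerable. Since the partition into future challenge and corrupted users is not known at \tb{Init}, I would have $\mc{B}$ fix each user's role by an independent random choice enabled by the doubling, and then carefully bound the probability that this choice is consistent with $\mc{A}$'s adaptive behavior, aborting otherwise. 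Establishing that this consistency probability remains non-negligible---the abort analysis of \cite{GentryW09,KolonelosMW23}---is the delicate step, and it is precisely what certifies that $\mc{B}$'s advantage is polynomially related to that of $\mc{A}$. Combining the construction, the correctness argument, and this reduction then yields the claimed adaptively secure $\Pi_{AD}$.
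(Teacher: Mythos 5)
Your construction deviates from the Gentry--Waters transformation in a way that makes the reduction unsalvageable. In your $\tb{GenKey}$ each user retains \emph{both} slot keys $usk_{i,0}$ and $usk_{i,1}$, and your $\tb{Encaps}$ encrypts to a single fingerprinted set $\hat{S}$. Consequently a key-corruption query on $i$ must reveal both slot secret keys, one of which ($usk_{i,\gamma_i}$) your reduction $\mc{B}$ has deliberately placed inside the committed set $\tilde{S}$ and therefore does not know. You correctly identify this as the obstacle, but the repair you sketch --- guessing each user's future role (corrupted versus challenged) by an independent coin and aborting on inconsistency --- does not work: the guess must be correct simultaneously for every user in $CQ \cup S^*$, so the non-abort probability is $2^{-\Theta(|CQ|+|S^*|)}$, which is exponentially small rather than non-negligible. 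There is no ``delicate abort analysis'' that rescues this; it is exactly the naive partitioning argument that the Gentry--Waters transformation was designed to avoid.

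The actual transformation (instantiated as the DBE$_{AD}$ scheme in Section~\ref{sec:dbe-ad-scheme}, following \cite{GentryW09,KolonelosMW23}) differs in two essential ways. First, $\tb{GenKey}(i)$ samples a random bit $u$, keeps only $USK_{SS,2i-u}$, and \emph{erases} the other slot's secret key completely, while publishing both slot public keys. Second, $\tb{Encaps}$ encrypts to \emph{both} complementary sets $S_0=\{2j-z_j\}_{j \in S}$ and $S_1=\{2j-(1-z_j)\}_{j \in S}$, derives two session keys, and wraps a single random key $CK$ under each of them with an OMI-secure SKE, so that a user holding only one slot key can always decrypt one of the two branches. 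With this construction the reduction answers every corruption query without aborting: $\mc{B}$ commits $\tilde{S}=\{2i-s_i\}_{i \in [L]}$ for random bits $s_i$, self-generates the complementary slots, and on corruption of $i$ returns $(USK_{SS,2i-(1-s_i)},\,1-s_i)$, which is distributed identically to an honest key because the honest user's bit $u$ is uniform and the discarded secret key is erased and hence never visible to the adversary. The challenge is embedded in the $S_0$ branch by setting $z_j=s_j$, the $S_1$ branch is simulated honestly, and a short hybrid over the two branches together with the SKE security finishes the argument with only a constant, not exponential, security loss.
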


The active-adaptive security model is a modification of the adaptive security
model for DBE to allow the registration of malicious user public keys
\cite{KolonelosMW23}. This active-adaptive security model is very similar to
the adaptive security model above except that an attacker additionally
requests a malicious corruption query to register a malicious user public key.
The detailed description of this security model is given as follows:

\begin{definition}[Active-Adaptive Security] \label{def:dbe-aa-sec}
The active-adaptive security of DBE is defined in terms of the following 
experiment between a challenger $\mc{C}$ and a PPT adversary $\mc{A}$ where
$1^\lambda$ and $1^L$ are given as input:
\begin{enumerate}
\item \tb{Setup}: $\mc{C}$ obtains public parameters $PP$ by running 
$\tb{Setup}(1^{\lambda}, 1^L)$ and gives $PP$ to $\mc{A}$.

\item \tb{Query Phase}: $\mc{A}$ adaptively requests key generation, key
corruption, and malicious corruption queries. These queries are processed as 
follows:
    \begin{itemize}
    \item Key Generation: $\mc{A}$ issues this query on an index $i \in [L]$ 
	such that $i \not\in KQ \wedge i \not\in MQ$. $\mc{C}$ creates $(USK_i, 
	UPK_i)$ by running $\tb{GenKey}(i, PP)$, adds $i$ to $KQ$, and responds 
	$UPK_i$ to $\mc{A}$. 

    \item Key Corruption: $\mc{A}$ issues this query on an index $i \in [L]$
	such that $i \in KQ \wedge i \not\in CQ$. $\mc{C}$ adds $i$ to $CQ$ and 
	responds with $USK_i$ to $\mc{A}$. 

	\item Malicious Corruption: $\mc{A}$ issues this query on an index $i \in 
	[L]$ such that $i \not\in KQ \wedge i \not\in MQ$. $\mc{C}$ adds $i$ to
	$MQ$ and stores $UPK_i$. 
    \end{itemize}

\item \tb{Challenge}: $\mc{A}$ submits a challenge set $S^* \subseteq KQ
\setminus (CQ \cup MQ)$. $\mc{C}$ obtains a ciphertext tuple $(CH^*, CK^*)$
by running $\tb{Encaps} (S^*, \{ (j, UPK_j) \}_{j \in S^*}, PP)$. It sets
$CK_0^* = CK^*$ and $CK_1^* = RK$ by selecting a random $RK$. It flips a 
random coin $\mu \in \bits$ and gives $(CH^*, CK_\mu^*)$ to $\mc{A}$.

\item \tb{Guess}: Finally, $\mc{A}$ outputs a guess $\mu' \in \bits$, and wins 
the game if $\mu = \mu'$.
\end{enumerate}
The advantage of $\mc{A}$ is defined as $\Adv_{DBE,\mc{A}}^{AA} (\lambda) = 
\big| \Pr[\mu = \mu'] - \frac{1}{2} \big|$ where the probability is taken over
all the randomness of the experiment. A DBE scheme is active-adaptively secure
if for all probabilistic polynomial-time (PPT) adversary $\mc{A}$, the
advantage of $\mc{A}$ is negligible in the security parameter $\lambda$.
\end{definition}

\begin{lemma}[\cite{KolonelosMW23}] \label{lem:conv-ad-to-aa}
Let $\Pi_{AD}$ be an adaptively secure DBE scheme. Then $\Pi_{AD}$ is also 
active-adaptively secure.
\end{lemma}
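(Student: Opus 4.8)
The plan is to exhibit a reduction showing that any PPT adversary $\mc{A}$ against the active-adaptive security of $\Pi_{AD}$ can be turned into a PPT adversary $\mc{B}$ against its adaptive security with the same success probability, giving $\Adv_{DBE,\mc{A}}^{AA}(\lambda) \le \Adv_{DBE,\mc{B}}^{AD}(\lambda)$. Since both notions refer to \emph{the same} scheme $\Pi_{AD}$, the only gap between the two games is the extra malicious corruption query granted in the active-adaptive model, and the heart of the argument is to show that this query confers no additional power. First I would have $\mc{B}$ receive $PP$ from its own adaptive challenger $\mc{C}$ and forward it to $\mc{A}$ unchanged, while maintaining local copies of the bookkeeping sets $KQ$, $CQ$, and $MQ$.

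In the query phase, the key generation and key corruption queries of $\mc{A}$ are relayed verbatim: on a key generation query for an index $i$, $\mc{B}$ forwards it to $\mc{C}$, obtains $UPK_i$, and returns it; on a key corruption query for $i$, $\mc{B}$ forwards it and returns $USK_i$. The key observation is that a malicious corruption query on an index $i$ carries a public key $UPK_i$ chosen by $\mc{A}$ itself and needs no secret material from the challenger: the active-adaptive challenger merely records $i$ in $MQ$ and stores $UPK_i$ (optionally after running $\tb{IsValid}$, which depends only on public data). Hence $\mc{B}$ answers such a query entirely on its own, with no corresponding query to $\mc{C}$, and because malicious indices are kept disjoint from the honestly generated $KQ$, the adaptive challenger's view of $KQ$ and $CQ$ remains perfectly synchronized with what $\mc{A}$ expects.

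In the challenge phase, $\mc{A}$ submits $S^* \subseteq KQ \setminus (CQ \cup MQ)$. Since $S^* \cap MQ = \emptyset$, every index in $S^*$ was produced through a forwarded honest key generation query, so $\mc{C}$ already knows all the public keys $\{(j, UPK_j)\}_{j \in S^*}$ required to run $\tb{Encaps}$; moreover $S^* \subseteq KQ \setminus (CQ \cup MQ)$ immediately implies $S^* \subseteq KQ \setminus CQ$, which is exactly the admissibility condition of the adaptive game. Thus $\mc{B}$ forwards $S^*$ to $\mc{C}$, relays the returned pair $(CH^*, CK_\mu^*)$ to $\mc{A}$, and finally outputs the bit $\mu'$ that $\mc{A}$ outputs.

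The reason the simulation is perfect is that the maliciously registered keys never enter the challenge ciphertext---they are explicitly excluded from $S^*$---and the security experiment provides no decapsulation oracle, so the challenger never uses them for anything beyond storage. Consequently $\mc{B}$ wins precisely when $\mc{A}$ does. The step that most deserves care, rather than a genuine obstacle, is the set bookkeeping: I must confirm that the side conditions $i \notin KQ \wedge i \notin MQ$ on the active-adaptive queries, together with the challenge constraint, always translate into legal moves for $\mc{B}$ in the adaptive game, so that $\mc{B}$ never queries $\mc{C}$ on an index violating $\mc{C}$'s own constraints. As the honest set $KQ$ and the malicious set $MQ$ stay disjoint by construction, this consistency is preserved throughout the experiment, and the claimed implication follows.
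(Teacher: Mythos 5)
Your reduction is correct. The paper itself gives no proof of this lemma---it is imported from \cite{KolonelosMW23} by citation---so there is nothing to diverge from, and your argument is the standard one: since key generation and malicious corruption both require $i \notin KQ \wedge i \notin MQ$, the sets $KQ$ and $MQ$ stay disjoint, the constraint $S^* \subseteq KQ \setminus (CQ \cup MQ)$ collapses to the adaptive game's condition $S^* \subseteq KQ \setminus CQ$, and because $\tb{Encaps}$ takes as input only $\{(j, UPK_j)\}_{j \in S^*}$ and the experiment offers no decapsulation oracle, the adversarially registered keys never influence anything the challenger computes; hence $\mc{B}$ simulates the active-adaptive game perfectly and $\Adv_{DBE,\mc{A}}^{AA}(\lambda) \le \Adv_{DBE,\mc{B}}^{AD}(\lambda)$.
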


\section{Construction}

In this section, we propose a basic DBE scheme for the semi-static security
and an enhanced DBE scheme for the adaptive security.

\subsection{Semi-Static Construction} \label{sec:dbe-ss-scheme}

Our basic DBE$_{SS}$ scheme has a similar structure to the KMW-DBE scheme with
linear size public parameters, constant size ciphertexts, and constant size
private keys of Kolonelos et al. \cite{KolonelosMW23}. However, our DBE$_{SS}$
scheme uses composite-order bilinear groups instead of prime-order bilinear
groups and modifies some group elements to use the dual system encryption
technique in the security proof. The detailed description of our basic
DBE$_{SS}$ scheme is given as follows:

\begin{description}
\item [$\tb{DBE}_{SS}.\tb{Setup}(1^\lambda, 1^L)$:] Let $\lambda$ be a security
parameter and $L$ be the maximum number of users. It first generates bilinear
groups $\G, \G_T$ of composite order $N = p_1 p_2 p_3$ where $p_1, p_2$, and
$p_3$ are random primes. It selects random generators $g_1, g_3$ of $\G_{p_1},
\G_{p_3}$ respectively.
It selects random $\alpha \in \Z_N$ and $u \in \G_{p_1}$. Next, it selects 
random $\{ Y_k \}_{1 \leq k \leq 2L} \in \G_{p_3}$ and creates $\{ A_k = 
g^{\alpha^k} \}_{1 \leq k \leq L}$, $\{ U_k = u^{\alpha^k} Y_k \}_{1 \leq k 
\leq 2L}$. 
It chooses a pairwise independent hash function $\tb{H}$ such that
$\tb{H} : \G_T \rightarrow \bits^\lambda$. It outputs public parameters 
	\begin{align*}
    PP = \Big(
    &   (N, \G, \G_T, e),~ g = g_1, Y = g_3,~
        \big\{ A_k \big\}_{1 \leq k \leq L},~ 
		\big\{ U_k \big\}_{1 \leq k \neq L+1 \leq 2L},~ 
        \Omega = e(g, U_{L+1}),~ \tb{H}
    \Big).
    \end{align*}

\item [$\tb{DBE}_{SS}.\tb{GenKey}(i, PP)$:] It selects random $\gamma_i \in 
\Z_N$ and $\{ Y_k \}_{1 \leq k \leq L} \in \G_{p_3}$. It outputs a private 
key $USK_i$ and a public key $UPK_i$ as 
	\begin{align*}
    USK_i = \Big( K_i = U_{L+1-i}^{\gamma_i} Y_{L+1-i} \Big),~ 
    UPK_i = \Big( V_i = g^{\gamma_i},~
        \big\{ V_{i,k} = U_{k}^{\gamma_i} Y_k 
		\big\}_{1 \leq k \neq L+1-i \leq L}
    \Big).
    \end{align*}

\item [$\tb{DBE}_{SS}.\tb{IsValid}(j, UPK_j, PP)$:] Let $UPK_j = (V_j, \{
V_{j,k} \})$. It computes $T = e(V_j, U_L)$. For all $k \in \{ 1, \ldots, L
\} \setminus \{ L+1-j \}$, it checks that $T \stackrel{?}{=} e(A_{L-k},
V_{j,k})$ where $A_0 = g$. If it passes all checks, then it outputs 1.
Otherwise, it outputs 0.

\item [$\tb{DBE}_{SS}.\tb{Encaps}(S, \{ (j, UPK_j) \}_{j \in S}, PP)$:] 
Let $UPK_j = (V_j, \{ V_{j,k} \})$. It selects random $t \in \Z_N$ and outputs 
a ciphertext header 
	\begin{align*}
    CH = \Big(
    C_1 = g^t,~
    C_2 = \Big( \prod_{j \in S} A_j V_j \Big)^t
    \Big)
    \end{align*}
and a session key $CK = \tb{H}(\Omega^t)$.

\item [$\tb{DBE}_{SS}.\tb{Decaps}(S, CH, i, USK_i, \{ (j, UPK_j) \}_{j \in S}, 
PP)$:] Let $CT = (C_1, C_2, C)$, $USK_i = K_i$, and $UPK_j = (V_j, \{ 
V_{j,k} \})$. If $i \not\in S$, it outputs $\perp$.
It computes decryption components
	\begin{align*}
	D_1 = K_i,~
	D_2 = U_{L+1-i},~
	D_3 = \prod_{j \in S \setminus \{i\}} U_{L+1-i+j} V_{j,L+1-i}.
	\end{align*}
It outputs a session key 
	$CK = \tb{H}( e(C_2, D_2) \cdot e(C_1, D_1 \cdot D_3)^{-1} )$.
\end{description}

To show the correctness of the basic DBE scheme, we show that a correct session 
key can be derived. If $i \in S$, then we can check that a session element is 
derived by the following equation
    \begin{align*}
    &   e(C_2, D_2) 
     =  e((\prod_{j \in S} A_j V_j)^t, U_{L+1-i}) \\
    &=  e((A_i V_i)^t, U_{L+1-i}) \cdot 
	 	e((\prod_{j \in S \setminus \{i\}} A_j V_j)^t, U_{L+1-i}) \\
    &=  e((g^{\alpha^i})^t, u^{\alpha^{L+1-i}} Y_{}) \cdot
        e((g^{\gamma_i})^t, u^{\alpha^{L+1-i}} Y_{}) \cdot
        e((\prod_{j \in S \setminus \{i\}} g^{\alpha^j} g^{\gamma_j})^t, 
		u^{\alpha^{L+1-i}} Y_{}) \\
    &=  e(g^t, u^{\alpha^{L+1}}) \cdot
        e(g^t, u^{\alpha^{L+1-i} \gamma_i}) \cdot
        e(g^t, \prod_{j \in S \setminus \{i\}} 
			u^{\alpha^{L+1-i+j}} \cdot u^{\alpha^{L+1-i} \gamma_j}) \\
    &=  \Omega^t \cdot e(C_1, K_i) \cdot
        e(C_1, \prod_{j \in S \setminus \{i\}} U_{L+1-i+j} V_{j,L+1-i})
	 = \Omega^t \cdot e(C_1, D_1 \cdot D_3).
    \end{align*}

\subsection{Adaptive Construction} \label{sec:dbe-ad-scheme}

Our enhanced DBE$_{AD}$ scheme is derived by applying the transformation of
Gentry and Waters \cite{GentryW09} to our basic DBE$_{SS}$ scheme. The GW
transformation is a method that transforms a semi-statically secure BE scheme
into an adaptively secure BE scheme and can be applied to DBE schemes as well.
The detailed description of our DBE$_{AD}$ scheme is given as follows:

\begin{description}
\item [$\tb{DBE}_{AD}.\tb{Setup}(1^\lambda, 1^L)$:] Let $\lambda$ be a security
parameter and $L$ be the number of users. It obtains $PP_{SS}$ by running
$\tb{DBE}_{SS}.\tb{Setup}(1^\lambda, 1^{2L})$. It outputs public parameters $PP =
PP_{SS}$.

\item [$\tb{DBE}_{AD}.\tb{GenKey}(i, PP)$:] Let $i \in [L]$. It generates 
key pairs $(USK_{SS,2i}, UPK_{SS,2i})$ and $(USK_{SS,2i-1}, UPK_{SS,2i-1})$ by
running $\tb{DBE}_{SS}.\tb{GenKey}(2i, PP_{SS})$ and $\tb{DBE}_{SS}.\tb{GenKey} 
(2i-1, PP_{SS})$ respectively. It selects a random bit $u \in \bits$ and 
erases $USK_{SS,2i-(1-u)}$ completely. It outputs a private key $USK_i = 
(USK_{SS,2i-u}, u)$ and a public key $UPK_i = (UPK_{SS,2i}, UPK_{SS,2i-1})$.

\item [$\tb{DBE}_{AD}.\tb{IsValid}(j, UPK_j, PP)$:] Let $UPK_j = (UPK_{SS,2j}, 
UPK_{SS,2j-1})$. It checks that $\tb{DBE}_{SS}.\tb{IsValid} (2j, \lb UPK_{SS,2j}, 
PP_{SS}) = 1$ and $\tb{DBE}_{SS}.\tb{IsValid}(2j-1, UPK_{SS,2j-1}, PP_{SS})
= 1$. If it passes all checks, then it outputs 1. Otherwise, it outputs 0.

\item [$\tb{DBE}_{AD}.\tb{Encaps}(S, \{ (j, UPK_j) \}_{j \in S}, PP)$:] 
Let $S \subseteq [L]$ and $UPK_j = (UPK_{SS,2j}, UPK_{SS,2j-1})$. 
	\begin{enumerate}
	\item It selects random bits $z = \{ z_j \}_{j \in S}$ where $z_j \in
	\bits$. Next, it defines two sets $S_0 = \{ 2j - z_j \}_{j \in S}$ and 
	$S_1 = \{ 2j - (1-z_j) \}_{j \in S}$. 

	\item It obtains two ciphertext pairs $(CH_{SS,0}, CK_{SS,0})$ and 
	$(CH_{SS,1}, CK_{SS,1})$ by running $\tb{DBE}_{SS}.\tb{Encaps} \lb(S_0, 
	\{ (k, UPK_{SS,k} \}_{k \in S_0}, PP_{SS})$ and $\tb{DBE}_{SS}.\tb{Encaps} 
	(S_1, \{ (k, UPK_{SS,k} \}_{k \in S_1}, PP_{SS})$ respectively. 

	\item It selects a random message $CK \in \bits^\lambda$. It obtains
	symmetric key ciphertexts $CT_0$ and $CT_1$ by running $\tb{SKE.Encrypt} 
	(CK_{SS,0}, CK)$ and $\tb{SKE.Encrypt} (CK_{SS,1}, CK)$ respectively.

	\item It outputs a ciphertext header $CH = (CH_{SS,0}, CH_{SS,1},
	CT_0, CT_1, z )$ and a session key $CK$. 
	\end{enumerate}

\item [$\tb{DBE}_{AD}.\tb{Decaps}(S, CH, i, USK_i, \{ (j, UPK_j) \}_{j \in S}, 
PP)$:] Let $USK_i = (USK_{SS,2i-u}, u)$. If $i \not\in S$, it outputs 
$\perp$. 
	\begin{enumerate} 
	\item It derives two sets $S_0 = \{ 2j - z_j \}_{j \in S}$ and $S_1 = 
	\{ 2j - (1-z_j) \}_{j \in S}$ If $z_i = u$, then it sets $S' = S_0, 
	CH'_{SS} = CH_{SS,0}, CT' = CT_0$. Otherwise, it sets $S' = S_1, CH'_{SS} 
	= CH_{SS,1}, CT' = CT_1$.
	
	\item It obtains $CK'_{SS}$ by running $\tb{DBE}_{SS}.\tb{Decaps} (S', 
	CH'_{SS}, 2i-u, USK_{SS,2i-u}, \{ (k, UPK_{SS,k} \}_{k \in S'}, \lb
	PP_{SS})$. 

	\item It obtains a decrypted message $CK$ by running $\tb{SKE.Decrypt} 
	(CK'_{SS}, CT')$ and outputs a session key $CK$.
	\end{enumerate}
\end{description}

The correctness of our enhanced DBE scheme easily followed from the
correctness of the underlying SKE and DBE$_{SS}$ schemes.

\section{Security Analysis}

In this section, we show that our DBE$_{SS}$ scheme provides the semi-static
security under static assumptions in composite-order bilinear groups. Then, we
show that our DBE$_{AD}$ scheme provides the adaptive and active-adaptive
security.

We prove the semi-static security of our DBE$_{SS}$ scheme by using the
D{\'{e}}j{\`{a}} Q technique, which is a variant of the dual system encryption
(DSE) technique \cite{Waters09,ChaseM14}. In particularly, we prove our
DBE$_{SS}$ scheme by following the strategy of Wee \cite{Wee16} that was used to
prove the static security of a variant BGW-BE scheme in composite-order
bilinear groups. The basic idea of DSE proof is to change normal ciphertexts
and normal private keys into semi-functional ciphertexts and semi-functional
private keys through hybrid games. In the final game, since the
semi-functional challenge ciphertext and semi-functional private keys are not
related to each other, it is relatively easy for a simulator to generate
semi-functional private keys that is not related to the challenge
semi-functional ciphertext. Thus, it is possible to show that the challenge
session key is random. The D{\'{e}}j{\`{a}} Q technique is very similar to the
DSE technique except that it can be used to change normal private keys to
semi-functional private keys even if the private keys do not have random
variables.

\begin{theorem}[Semi-Static Security] \label{thm:dbe-ss-sec}
The basic DBE scheme is semi-statically secure if the SD and GSD assumptions 
hold.
\end{theorem}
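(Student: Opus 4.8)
The plan is to prove that the real challenge session key is pseudorandom by a dual system encryption argument in the D\'ej\`a Q style, mirroring Wee's treatment of the composite-order BGW variant. The decisive structural observation is that a semi-static adversary never sees a private key: it only learns $PP$, the public keys $\{UPK_j\}_{j\in\tilde S}$, and the challenge $(CH^*,CK_\mu^*)$. Moreover $U_{L+1}$ is deliberately excluded from $PP$, so the index $k=L+1$ is touched \emph{only} by the session key $\tb{H}(\Omega^t)=\tb{H}(e(g_1,u)^{t\alpha^{L+1}})$, whereas every other appearance of $u^{\alpha^k}$ sits at an index $k\in\{1,\dots,2L\}\setminus\{L+1\}$. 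Since $\Omega^t$ lives in $\G_{T,p_1}$ and is pinned down by $C_1=g^t$, all fresh entropy for the session key must be manufactured in the $\G_{p_2}$ subgroup; the whole proof is about moving a uniform $\G_{p_2}$-mask onto the $(L+1)$-st slot.

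I would first rewrite the challenger so that it computes the challenge session key as $\tb{H}\!\big(e(C_1,\widehat U_{L+1})\big)$ using a stand-in $\widehat U_{L+1}=u^{\alpha^{L+1}}g_2^{f(L+1)}Y_{L+1}$ carrying a $\G_{p_2}$-exponent $f$; as long as $f\equiv 0$ this equals $\tb{H}(\Omega^t)$, and since $U_{L+1}$ carries no $\G_{p_2}$-part the rewrite is exactly faithful. In Game $1$ I make the ciphertext semi-functional, replacing $C_1=g^t$ by $C_1=g^tg_2^{\tilde t}$ and setting $C_2=Z^{\sum_{j\in S^*}(\alpha^j+\gamma_j)}$ where $Z$ is the SD challenge; whether $Z\in\G_{p_1}$ or $Z\in\G_{p_1p_2}$ the session key is unchanged (pairing the extra $g_2^{\tilde t}$ against the still-$\G_{p_2}$-free $\widehat U_{L+1}$ is trivial), so Game $0$ and Game $1$ are SD-indistinguishable.

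Next come the D\'ej\`a Q copy steps, Game $2$ through Game $q+1$. In the $\ell$-th step I use a GSD challenge, whose $\G_{p_3}$-blinded shape is tailored so that toggling the $\G_{p_2}$ component adds one fresh rank-one term to the $\G_{p_2}$-exponents of the parameters: the exponent of $U_k$ (and of the stand-in $\widehat U_k$) moves from $\sum_{\ell'<\ell}r_{\ell'}\hat\alpha_{\ell'}^{k}$ to $\sum_{\ell'\le\ell}r_{\ell'}\hat\alpha_{\ell'}^{k}$ for fresh uniform $(r_\ell,\hat\alpha_\ell)$, the $\G_{p_3}$ masks $Y_k$ absorbing the cross terms so that each hop matches the GSD distribution. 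After $q=2L$ steps the $\G_{p_2}$-exponent vector is $\big(f(1),\dots,f(2L)\big)=\sum_{\ell=1}^{q}r_\ell\,(\hat\alpha_\ell,\dots,\hat\alpha_\ell^{2L})$, a uniform element of the span of $q$ Vandermonde vectors. For distinct nonzero $\hat\alpha_\ell$ this span is all of $\Z_{p_2}^{2L}$ when $q=2L$, so in particular $e_{L+1}\in\mathrm{span}$; this forces $f(L+1)$ to be uniform and independent of $\{f(k)\}_{k\neq L+1}$, which are the only $\G_{p_2}$-data leaked (the $UPK_j$ merely rescale them by the known $\gamma_j\bmod p_2$). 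Hence the hash input $\Omega^t\cdot e(g_2,g_2)^{\tilde t f(L+1)}$ is, conditioned on the adversary's view, uniform over a coset of $\G_{T,p_2}$ of size $p_2$; since $\tb{H}$ is pairwise independent and $p_2$ is exponentially large, the leftover hash lemma makes $CK$ statistically close to uniform, hence to $CK_1^*=RK$. Summing the hops gives $\Adv_{DBE,\mc{A}}^{SS}\le \Adv_{\mc{B}}^{SD}+O(L)\cdot\Adv_{\mc{B}}^{GSD}+\mathrm{negl}(\lambda)$.

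The step I expect to be hardest is the D\'ej\`a Q bookkeeping in the GSD hops: I must verify that a single GSD toggle genuinely \emph{increments} the number of independent random functions in $\G_{p_2}$ rather than merely re-randomizing the existing one, and that at every hybrid the entire view --- $PP$, each $UPK_j$, the ciphertext $(C_1,C_2)$, and the rewritten session key --- stays perfectly simulable from the GSD instance, with the $\G_{p_3}$ components soaking up all residual terms so the simulated and real distributions coincide. A secondary subtlety is the Vandermonde counting: I must confirm the index count ($2L$ exposed slots versus the single hidden slot $L+1$) is exactly what guarantees $e_{L+1}\in\mathrm{span}$, since an off-by-one here would wreck the final min-entropy bound.
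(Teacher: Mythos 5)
Your proposal is correct and follows essentially the same route as the paper's proof: reparametrize $\gamma_i$ so the real session key becomes $\tb{H}(e(C_1^*,U_{L+1}))$, use SD to make the challenge header semi-functional, run $2L$ GSD-based D{\'{e}}j{\`{a}} Q hops that each add one rank-one Vandermonde term to the $\G_{p_2}$-exponents of the $U_k$ list, invoke Vandermonde invertibility to turn those exponents into independent uniform $\delta_k$, and finish with the leftover hash lemma on the never-revealed $\delta_{L+1}$. Your loss $\Adv_{\mc{B}}^{SD}+O(L)\cdot\Adv_{\mc{B}}^{GSD}+\mathrm{negl}$ matches the paper's $\Adv_{\mc{B}}^{SD}+2L\cdot\Adv_{\mc{B}}^{GSD}+O(L^2/p_2)$.
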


\begin{proof}
We first define the semi-functional type of elements and ciphertext. For
the semi-functional type, we let $g_2$ denote a fixed generator of the
subgroup $\G_{p_2}$.
\begin{description}
\item [\tb{UL-$(\eta,0)$}.] Let $UL = \{ U'_i = u^{\alpha^i} Y_i \}_{i=1}^{2L}$
be a normal list of elements. Let $r_j, a_j$ be fixed random exponents for
index $j \in [k]$. It selects random $Y'_1, \ldots, Y'_{2L} \in \G_{p_3}$ and
outputs a type-$(\eta,0)$ list of elements as $$UL = \Big\{ U_i = U'_i
g_2^{\sum_{j=1}^{k-1} r_j a_j^{L+1-i}} Y'_i \Big\}_{i=1}^{2L}.$$

\item [\tb{UL-$(\eta,1)$}.] Let $UL = \{ U'_i = u^{\alpha^i} Y_i \}_{i=1}^{2L}$
be a normal list of elements. Let $r_j, a_j$ be fixed random exponents for
index $j \in [k]$. It selects random $Y'_1, \ldots, Y'_{2L} \in \G_{p_3}$
outputs a type-$(\eta,1)$ list of elements as $$UL = \Big\{ U_i = U'_i
g_2^{\sum_{j=1}^{k-1} r_j a_j^{L+1-i}} g_2^{r_k \alpha^{L+1-i}} Y'_i
\Big\}_{i=1}^{2L}.$$

\item [\tb{UL-SF}.] Let $UL = \{ U'_i = u^{\alpha^i} Y_i \}_{i=1}^{2L}$ be a
list of normal elements. It chooses random $\delta_1, \ldots, \delta_{2L} \in
\Z_N$, $Y'_1, \ldots, Y'_{2L} \in \G_{p_3}$ and outputs a semi-functional list
of elements as $$UL = \Big\{ U_i = U'_i g_2^{\delta_i} Y'_i
\Big\}_{i=1}^{2L}.$$

\item [\tb{CH-SF}.] Let $CH' = (C'_1, C'_2)$ be a normal ciphertext header.
It chooses random $c, d \in \Z_N$ and outputs a semi-functional ciphertext
header $CH = \big( C_1 = C'_1 g_2^c, C_2 = C'_2 g_2^{c d} \big)$.
\end{description}

The security proof consists of a sequence of hybrid games $\tb{G}_0, \tb{G}_1,
\ldots, \tb{G}_5$. The first game will be the original semi-static 
security game and the last one will be a game in which an adversary has no
advantage. We define the games as follows:
\begin{description}
\item [\tb{Game} $\tb{G}_0$.] This game is the original semi-static security 
game defined in Section \ref{def:dbe-ss-sec}. That is, the simulator of this 
game simply follows the honest algorithms. In this game, all parameters, key 
elements, and the challenge ciphertext are normal.

\item [\tb{Game} $\tb{G}_1$.] This game is almost the same as the game
$\tb{G}_0$ except that the simulator sets $\gamma_i = \gamma'_i - \alpha^i$ by
selecting random $\gamma'_i \in \Z_N$ for each index $i$ and creates the
challenge session key $CK_0^* = \tb{H}(e(g^t, U_{L+1}))$ instead of $CK_0^* = 
\tb{H}(\Omega^t)$ where $U_{L+1} = u^{\alpha^{L+1}} Y_{L+1}$.

\item [\tb{Game} $\tb{G}_2$.] In this game, the challenge ciphertext header is
changed to be semi-functional, but all other elements are still normal.

\item [\tb{Game} $\tb{G}_3$.] Next, we define a new game $\tb{G}_3$. In this
game, we change the distribution of $UL$ from normal to semi-functional.
Because of this change, the public parameters, all key pairs, and the
challenge session key that depend on $UL$ also changed.
For the analysis of this game, we define additional sub-games $\tb{H}_{1,0},
\tb{H}_{1,1}, \ldots, \tb{H}_{\eta,0}, \tb{H}_{\eta,1}, \lb \ldots, \tb{H}_{2L,0},
\tb{H}_{2L,1}, \tb{H}_{2L+1,0}$ that change the type of elements in $UL$ one
by one where $\tb{H}_{1,0} = \tb{G}_2$ and $\tb{H}_{2L+1,0} = \tb{G}_3$. 
A more detailed definition of these sub-games is given as follows:
	\begin{description}
	\item [\tb{Game} $\tb{H}_{\eta,0}$.] This game is similar to the game
	$\tb{G}_2$ excpet that the simulator generates a type-$(\eta,0)$ list $UL$. 

	\item [\tb{Game} $\tb{H}_{\eta,1}$.] This game is also similar to the game
	$\tb{G}_2$ except the simulator generates a type-$(\eta,1)$ list $UL$. 
    \end{description}

\item [\tb{Game} $\tb{G}_4$.] In this game $\tb{G}_4$, the only change from
the game $\tb{G}_3$ is that the simulator generates a semi-functional $UL$. 

\item [\tb{Game} $\tb{G}_5$.] In this final game $\tb{G}_5$, the challenge 
session key $CK_0^*$ is changed to be random. Thus the adversary cannot 
distinguish the challenge session key.
\end{description}
Let $\Adv_{\mc{A}}^{G_j}$ be the advantage of $\mc{A}$ in the game $\tb{G}_j$.
We have that $\Adv_{DBE,\mc{A}}^{SS}(\lambda) = \Adv_{\mc{A}}^{G_0}$, and
$\Adv_{\mc{A}}^{G_5} = 0$. From the following Lemmas \ref{lem:dbe-ss-g0-g1},
\ref{lem:dbe-ss-g1-g2}, \ref{lem:dbe-ss-h0-h1}, \ref{lem:dbe-ss-h1-h2},
\ref{lem:dbe-ss-g3-g4}, and \ref{lem:dbe-ss-g4-g5}, we obtain the equation 
	\begin{align*}
    \Adv_{DBE,\mc{A}}^{SS}(\lambda)
    &\leq \sum_{j=1}^5 \big| \Adv_{\mc{A}}^{G_{j-1}} - \Adv_{\mc{A}}^{G_j} \big|
     \leq \Adv_{\mc{B}}^{SD}(\lambda) + 2L \cdot \Adv_{\mc{B}}^{GSD}(\lambda) +
		O({L^2}/{p_2})
    \end{align*}
where $L$ is the number of users. This completes the proof.
\end{proof}

\begin{lemma} \label{lem:dbe-ss-g0-g1}
No adversary can distinguish $\tb{G}_0$ from $\tb{G}_1$ since two games 
$\tb{G}_0$ and $\tb{G}_1$ are equal.
\end{lemma}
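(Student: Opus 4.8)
The plan is to show that $\tb{G}_0$ and $\tb{G}_1$ induce identical distributions over the adversary's entire view, so that $\Adv_{\mc{A}}^{G_0} = \Adv_{\mc{A}}^{G_1}$ exactly and the two games are equal as claimed. There are precisely two modifications to account for: the reparameterization $\gamma_i = \gamma'_i - \alpha^i$ of each user's secret exponent, and the rewriting of the honest challenge key from $\tb{H}(\Omega^t)$ to $\tb{H}(e(g^t, U_{L+1}))$. I would treat each as a purely syntactic rewrite that leaves the joint distribution of $PP$, the published key material $\{(j, UPK_j)\}_{j \in \tilde{S}}$, the challenge header $CH^*$, and the challenge key $CK_\mu^*$ unchanged.

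For the first modification, first I would observe that in $\tb{G}_0$ each $\gamma_i$ is drawn uniformly from $\Z_N$, whereas in $\tb{G}_1$ it is set to $\gamma_i = \gamma'_i - \alpha^i$ with $\gamma'_i$ uniform. Since $\alpha$ (hence $\alpha^i$) is a fixed quantity once $PP$ is sampled, the map $\gamma'_i \mapsto \gamma'_i - \alpha^i$ is a bijection of $\Z_N$, so the effective exponent $\gamma_i$ remains uniform and independent across $i$. Every group element of $UPK_i = (V_i, \{ V_{i,k} \})$ and of $USK_i = K_i$ is a deterministic function of $\gamma_i$ and of the independently sampled $\G_{p_3}$ masks, so their joint distribution is identical in the two games. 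I would note in passing that this reparameterization is the bookkeeping that later games exploit (the $-\alpha^i$ term is what makes $A_i V_i$ collapse), but that for this lemma nothing computed or revealed can detect it.

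For the second modification, the key step is the bilinearity identity $\Omega^t = e(g, U_{L+1})^t = e(g^t, U_{L+1})$, so the two expressions for $CK_0^*$ are equal as elements of $\G_T$ and therefore map to the same $\lambda$-bit string under $\tb{H}$. Here I would also invoke the orthogonality of the subgroups to confirm that the $\G_{p_3}$ factor $Y_{L+1}$ of $U_{L+1} = u^{\alpha^{L+1}} Y_{L+1}$ contributes nothing, since $e(g, Y_{L+1}) = 1_{\G_T}$ for $g \in \G_{p_1}$; thus both $\Omega$ and $e(g^t, U_{L+1})$ reduce to the same pairing of $\G_{p_1}$ elements. Because both changes are distribution-preserving rewrites rather than genuine game hops, I expect no real obstacle here. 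The only point requiring care is to state explicitly that $U_{L+1}$, although absent from $PP$, is available to the simulator (which knows $u$, $\alpha$, and the $\G_{p_3}$ generator), so that the $\tb{G}_1$ computation of the session key is well defined.
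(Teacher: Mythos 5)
Your proposal is correct and follows essentially the same route as the paper: the paper's proof simply writes out the $\tb{G}_1$ simulator with $V_i = g^{\gamma'_i} A_i^{-1}$ (the implicit substitution $\gamma_i = \gamma'_i - \alpha^i$) and $CK^* = \tb{H}(e(g^t, U_{L+1}))$, leaving the distributional equivalence implicit, whereas you make explicit the two facts the paper relies on --- that the shift $\gamma'_i \mapsto \gamma'_i - \alpha^i$ is a bijection of $\Z_N$ preserving uniformity, and that $\Omega^t = e(g^t, U_{L+1})$ by bilinearity. Your version is, if anything, slightly more complete than the paper's, which stops at describing the reparameterized simulation.
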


\begin{proof}
Suppose there exists an adversary $\mc{A}$ that distinguishes $\tb{G}_0$ from
$\tb{G}_1$ with a non-negligible advantage. $\mc{B}$ that interacts with 
$\mc{A}$ is described as follows:

\vs\noindent \tb{Init:} $\mc{A}$ submits an initial set $\tilde{S}$. 

\vs\noindent \tb{Setup:} $\mc{B}$ chooses random $\alpha \in \Z_N$, $u \in 
\G_{p_1}$, $\{ Y_k \}_{1 \leq k \leq 2L} \in \G_{p_3}$ and builds $\{ A_k = 
g^{\alpha^k} \}_{1 \leq k \leq L}, UL = \{ U_i = u^{\alpha^i} Y_i \}_{i=1}^{2L}$. 
It publishes 
	$$PP = \big( (N, \G, \G_T, e), g = g_1, Y = g_3, \{ A_k \}_{1 \leq 
	k \leq L}, \{ U_k \}_{1 \leq k \neq L+1 \leq 2L}, \Omega = e(g, U_{L+1}) 
	\big).$$

\svs\noindent \tb{Query Phase:} For each index $i \in \tilde{S}$, $\mc{B}$ 
selects random $\gamma'_i \in \Z_N$, $\{ Y'_{i,k} \}_{1 \leq k \leq L} \in 
\G_{p_3}$ and creates a public key
	\begin{align*}
	UPK_i = \big( 
	V_i = g^{\gamma'_i} A_i^{-1}, \big\{ 
	V_{i,j} = U_{j}^{\gamma'_i} U_{j+i} Y'_{i,j} 
	\big\}_{1 \leq j \neq L+1-i \leq L} \big)
	\end{align*}
by implicitly setting $\gamma_i = \gamma'_i - \alpha^i$. It gives $\{ (j, 
UPK_j) \}_{j \in \tilde{S}}$ to $\mc{A}$.

\svs\noindent \tb{Challenge:} For a challenge set $S^* \subseteq \tilde{S}$, 
$\mc{B}$ selects random $t \in \Z_N$ and creates a challenge ciphertext 
header and a session key
	\begin{align*}
    CH^* = \big( 
	C_1^* = g^t,~ 
	C_2^* = (g^t)^{\sum_{j \in S^*} \gamma'_j} \big),~
    CK^* = \tb{H}(e(g^t, U_{L+1})).
	\end{align*}
It sets $CK_0^* = CK^*$ and $CK_1^* = RK$ by selecting a random $RK$. Next, 
it flips a random coin $\mu \in \bits$ and gives $(CH^*, CK_\mu^*)$ to $\mc{A}$

\svs\noindent \tb{Guess:} $\mc{A}$ outputs a guess $\mu'$. If $\mu = \mu'$,
then $\mc{B}$ outputs 1. Otherwise, it outputs 0.

\vs\noindent This completes our proof.
\end{proof}

\begin{lemma} \label{lem:dbe-ss-g1-g2}
If the SD assumption holds, then no PPT adversary can distinguish $\tb{G}_1$
from $\tb{G}_2$ with a non-negligible advantage.
\end{lemma}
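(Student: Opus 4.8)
The plan is to build a reduction $\mc{B}$ that takes an SD instance $(g_1, g_3, Z)$, where $Z \in \G_{p_1}$ or $Z \in \G_{p_1 p_2}$, and simulates $\tb{G}_1$ to $\mc{A}$ when $Z \in \G_{p_1}$ and $\tb{G}_2$ when $Z \in \G_{p_1 p_2}$, so that any advantage in distinguishing the two games becomes SD advantage. The key structural observation is that $\tb{G}_1$ and $\tb{G}_2$ differ only in the challenge header $(C_1^*, C_2^*)$: the public parameters, all user keys, and the session key are produced identically in both games. Hence $\mc{B}$ can run the setup and query phases exactly as the honest simulator of $\tb{G}_1$ (as in the proof of Lemma \ref{lem:dbe-ss-g0-g1}), setting $g = g_1$, $Y = g_3$, sampling $\alpha \in \Z_N$ and $u \in \G_{p_1}$ itself, and building $\{ A_k \}$, $\{ U_k \}$, and the public keys $UPK_i$ under the implicit assignment $\gamma_i = \gamma'_i - \alpha^i$; in particular $\mc{B}$ knows $U_{L+1}$ even though it is excluded from $PP$.

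First I would exploit the algebraic simplification that $\tb{G}_1$ was designed to produce. Since $\gamma_j = \gamma'_j - \alpha^j$, we have $A_j V_j = g^{\alpha^j} g^{\gamma_j} = g^{\gamma'_j}$, so the honest header in $\tb{G}_1$ satisfies $C_2^* = (C_1^*)^{\sum_{j \in S^*} \gamma'_j}$ with $C_1^* = g^t$. This is exactly what lets the reduction raise the unknown element $Z$ to a known exponent without knowing $t$. Concretely, on the challenge set $S^*$, $\mc{B}$ sets $C_1^* = Z$ and $C_2^* = Z^{\sum_{j \in S^*} \gamma'_j}$, and computes the session key as $CK^* = \tb{H}(e(Z, U_{L+1}))$. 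It then finishes as the game prescribes: set $CK_0^* = CK^*$, choose a random $CK_1^*$, flip $\mu \in \bits$, hand $(CH^*, CK_\mu^*)$ to $\mc{A}$, and output $1$ iff $\mc{A}$ guesses $\mu$ correctly.

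Next I would run the case analysis. If $Z = X_1 \in \G_{p_1}$, write $Z = g^t$; then $(C_1^*, C_2^*) = (g^t, g^{t \sum \gamma'_j})$ is a normal header and $e(Z, U_{L+1}) = e(g^t, U_{L+1})$, so $\mc{B}$ reproduces $\tb{G}_1$ exactly. If $Z = X_1 R_1 \in \G_{p_1 p_2}$, write $X_1 = g^t$ and $R_1 = g_2^c$; then $C_1^* = g^t g_2^c$ and $C_2^* = g^{t \sum \gamma'_j} g_2^{c \sum \gamma'_j}$, which matches the CH-SF form with $d = \sum_{j \in S^*} \gamma'_j \bmod p_2$, so $\mc{B}$ reproduces $\tb{G}_2$. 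This yields $\big| \Adv_{\mc{A}}^{G_1} - \Adv_{\mc{A}}^{G_2} \big| \leq \Adv_{\mc{B}}^{SD}(\lambda)$.

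The hard part will be verifying the two distributional points peculiar to the semi-functional case. The first is that $(c,d)$ has the joint distribution required by CH-SF: $c$ is fixed by the $\G_{p_2}$-part $R_1$ and is independent of the $\gamma'_j$, while $d = \sum_{j \in S^*} \gamma'_j \bmod p_2$ must be uniform and independent of $\mc{A}$'s view. This holds because every element $\mc{A}$ sees (the public parameters and all public keys) lies in $\G_{p_1 p_3}$ and so reveals nothing about any $\gamma'_j \bmod p_2$; by CRT, $d$ is then uniform over $\Z_{p_2}$ and independent of $c$ provided $S^* \neq \emptyset$. The second, and the real crux, is that the session key must stay at its normal value even though the header is now semi-functional — which is precisely why $\mc{B}$ computes $\tb{H}(e(Z, U_{L+1}))$ rather than re-deriving it. When $Z = g^t g_2^c$, the extra factor contributes $e(g_2^c, U_{L+1})$, which is the identity because in both $\tb{G}_1$ and $\tb{G}_2$ the list $UL$ is still normal, so $U_{L+1} \in \G_{p_1 p_3}$ is orthogonal to $\G_{p_2}$. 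In other words, the $\G_{p_2}$-mass injected into the header is invisible to the still-normal $U_{L+1}$, keeping the session key unchanged; this is exactly the invariant that makes the step a legitimate normal-to-semi-functional ciphertext transition.
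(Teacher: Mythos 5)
Your proposal is correct and follows essentially the same route as the paper: embed the SD challenge $Z$ as $C_1^*$, use the $\tb{G}_1$ reparametrization $\gamma_j = \gamma'_j - \alpha^j$ to form $C_2^* = Z^{\sum_{j\in S^*}\gamma'_j}$ and $CK^* = \tb{H}(e(Z,U_{L+1}))$, and argue via CRT that $d = \sum_{j\in S^*}\gamma'_j \bmod p_2$ is fresh. Your added observations --- that the session key is preserved because the still-normal $U_{L+1} \in \G_{p_1 p_3}$ is orthogonal to the injected $\G_{p_2}$ component, and the (harmless) caveat $S^* \neq \emptyset$ --- are correct elaborations of what the paper states more tersely.
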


\begin{proof}
Suppose there exists an adversary $\mc{A}$ that distinguishes $\tb{G}_0$ from
$\tb{G}_1$ with a non-negligible advantage. A simulator $\mc{B}$ that solves
the SD assumption using $\mc{A}$ is given: a challenge tuple $D = ((N, \G,
\G_T, e), g_1, g_3)$ and $Z$ where $Z = Z_0 = X_1 \in \G_{p_1}$ or $Z = Z_1 =
X_1 R_1 \in \G_{p_1 p_2}$. The description of $\mc{B}$ that interacts with
$\mc{A}$ is almost the same as that of Lemma \ref{lem:dbe-ss-g0-g1} except
the generation of the challenge ciphertext header. The challenge ciphertext
header is generated as follows:

\svs\noindent \tb{Challenge:} For a challenge set $S^*$, $\mc{B}$ creates a 
challenge ciphertext header and a session key as
	\begin{align*}
    CH^* = \big( 
	C_1^* = Z,~ C_2^* = (Z)^{\sum_{j \in S^*} \gamma'_j} \big),~
    CK^* = \tb{H}(e(Z, U_{L+1}))
	\end{align*}
where $\{ \gamma'_i \}$ are chosen in the key query step. It sets $CK_0^* =
CK^*$ and $CK_1^* = RK$ by selecting a random $RK$. It flips a random coin 
$\mu \in \bits$ and gives $(CH^*, CK_\mu^*)$ to $\mc{A}$.

\vs If $Z = Z_0 = X_1$, then the simulation is the same as $\tb{G}_0$. If $Z
= Z_1 = X_1 R_1$, then it is the same as $\tb{G}_1$ since the challenge
ciphertext header is semi-functional by implicitly setting $c \equiv 
\text{dlog}(R_1) \bmod p_2, d \equiv \sum_{j \in S^*} \gamma'_j \bmod p_2$. 
Note that $d$ is random since $\{ \gamma'_j \}_{j \in S^*}$ modulo $p_2$ are 
not correlated with their values modulo $p_1$ by the Chinese Remainder Theorem 
(CRT). This completes our proof.
\end{proof}

\begin{lemma} \label{lem:dbe-ss-h0-h1}
If the GSD assumption holds, then no PPT adversary can distinguish
$\tb{H}_{\eta,0}$ from $\tb{H}_{\eta,1}$ with a non-negligible advantage.
\end{lemma}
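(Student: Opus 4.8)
The plan is to exhibit a PPT algorithm $\mc{B}$ that, on a GSD instance with $D=((N,\G,\G_T,e),g_1,g_3,X_1R_1,R_2Y_1)$ and challenge $Z$, perfectly simulates $\tb{H}_{\eta,0}$ when $Z=Z_0\in\G_{p_1p_3}$ and $\tb{H}_{\eta,1}$ when $Z=Z_1\in\G_{p_1p_2p_3}$; any gap between the two games then transfers directly to the GSD advantage. The two games differ only in the single $\eta$-th D\'{e}j\`{a}~Q layer $g_2^{r_\eta\alpha^{L+1-i}}$ that is multiplied into each element of $UL$, so the whole reduction is organized around making the $\G_{p_2}$-part of $Z$ (which is present exactly when $Z=Z_1$) switch this one layer on or off, while leaving every $\G_{p_1}$-part normal and every $\G_{p_3}$-part uniform.

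First I would sample $\alpha\in\Z_N$ and publish $g=g_1$, $Y=g_3$, and $A_k=g_1^{\alpha^k}$ for $1\le k\le L$. The $\eta$-th layer is the only one whose presence must depend on $Z$, so I would route it through $Z$ itself: using $Z$ as the base that carries both the normal $\G_{p_1}$-exponent and the toggled $\G_{p_2}$-exponent, the implicit $u$ becomes the $\G_{p_1}$-part of $Z$, which keeps all $\G_{p_1}$-components exactly normal, while the $\G_{p_2}$-part of $Z$ contributes the layer that vanishes when $Z=Z_0$. The earlier $\eta-1$ layers use fresh exponents $a_j,r_j$ that $\mc{B}$ picks itself, and since these add nothing in $\G_{p_1}$ I would install them by raising the $\G_{p_2p_3}$-element $R_2Y_1$ to the known exponents, placing the required $\G_{p_2}$-terms together with a $\G_{p_3}$-residue that a fresh $Y'_i\in\G_{p_3}$ re-randomizes. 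Since $g_1\in\G_{p_1}$ annihilates the other two subgroups under $e$, I can output $\Omega=e(g_1,Z)^{\alpha^{L+1}}$ without ever publishing $U_{L+1}$, and this value is identical under $Z_0$ and $Z_1$. Public keys are built as in Lemma~\ref{lem:dbe-ss-g0-g1} with $\gamma_i=\gamma'_i-\alpha^i$, but now from the semi-functional $U_j,U_{j+i}$, again re-randomizing each $\G_{p_3}$-part through the $Y'_{i,j}$.

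In every $\tb{H}_{\eta,\cdot}$ game the challenge header is already semi-functional, and its $\G_{p_2}$-part must be present in both; I would therefore build it from the third instance element, setting $C_1^*=(X_1R_1)^s$ for random $s$ (implicitly fixing $t$ and the semi-functional offset $c$), then $C_2^*=(C_1^*)^{\sum_{j\in S^*}\gamma'_j}$ exactly as in Lemma~\ref{lem:dbe-ss-g1-g2}, and $CK_0^*=\tb{H}(e(C_1^*,U_{L+1}))$ using the internally computed $U_{L+1}$. After setting $CK_1^*=RK$, flipping $\mu$, and returning $(CH^*,CK_\mu^*)$, the view is that of $\tb{H}_{\eta,0}$ when $Z=Z_0$ and of $\tb{H}_{\eta,1}$ when $Z=Z_1$, the fresh coefficient supplied by $Z$ being uniform and independent of the coefficients reused from $R_2Y_1$.

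The hard part is the coupling forced by exponentiating $Z$: one power of $\alpha$ acts on the $\G_{p_1}$- and $\G_{p_2}$-parts of $Z$ simultaneously, so the $\alpha$-power carried into the new $\G_{p_2}$-layer is tied to the (normal) $\G_{p_1}$-exponent, and I can neither cancel the unknown $\G_{p_1}$-part of $Z$ nor extract a standalone $\G_{p_2}$-generator from it. The delicate verification is thus that this forced alignment really reproduces the layer $g_2^{r_\eta\alpha^{L+1-i}}$ demanded by $\tb{H}_{\eta,1}$ at the correct (reflected) index, that the induced $\G_{p_2}$-coefficients and the re-randomized $\G_{p_3}$-parts match the game distributions exactly, and above all that every relation the adversary can test — the $\tb{IsValid}$ equalities and the pairing identities $e(A_a,U_b)=e(A_{a'},U_{b'})$ whenever $a+b=a'+b'$ — holds identically in the $Z_0$ and $Z_1$ worlds. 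The normality of all $\G_{p_1}$-components is exactly what makes these checks pass, and confirming it is the last and most error-prone step.
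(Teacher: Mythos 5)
Your reduction is essentially the paper's own: you embed the GSD challenge $Z$ as the base carrying both the implicit $u=X_2$ and the toggled $\eta$-th $\G_{p_2}$-layer, install the first $\eta-1$ layers via known exponents on $R_2Y_1$, re-randomize all $\G_{p_3}$-parts with fresh $Y'_i$, compute $\Omega=e(g_1,Z)^{\alpha^{L+1}}$, and build the semi-functional challenge header from $X_1R_1$, exactly as in the paper (your extra exponent $s$ on $X_1R_1$ is an inessential re-randomization). The coupling $a_\eta\equiv\alpha\bmod p_2$ that you flag as the delicate point is precisely what the paper's game $\tb{H}_{\eta,1}$ is defined to produce, with the decoupling to a fresh $a_\eta$ deferred to Lemma~\ref{lem:dbe-ss-h1-h2} via the CRT.
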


\begin{proof}
Suppose there exists an adversary $\mc{A}$ that distinguishes
$\tb{H}_{\eta,0}$ from $\tb{H}_{\eta,1}$ with a non-negligible advantage. A
simulator $\mc{B}$ that solves the GSD assumption using $\mc{A}$ is given: a
challenge tuple $D = ((N, \G, \G_T, e), \lb g_1, g_3, \lb X_1 R_1, R_2 Y_1)$
and $Z$ where $Z = Z_0 = X_2 Y_2 \in \G_{p_1 p_3}$ or $Z = Z_1 = X_2 R_3 Y_2
\in \G_N$. Then $\mc{B}$ that interacts with $\mc{A}$ is almost similar to
that of Lemma \ref{lem:dbe-ss-g1-g2} except the generation of $PP$ and the
challenge ciphertext. The setup and challenge step is described as follows:

\vs\noindent \tb{Setup:} $\mc{B}$ chooses random $\alpha \in \Z_N$ and 
implicitly sets $u = X_2 \in \G_{p_1}$. It selects random $r_1, \ldots, 
r_{k-1}, a_1, \lb \ldots, a_{k-1} \in \Z_N$ and builds a list of elements
	\begin{align*}
	UL = \big\{ U_i = Z^{\alpha^i} (R_2 Y_1)^{\sum_{j=1}^{k-1} r_j a_j^i} Y'_i 
	\big\}_{i=1}^{2L}
	\end{align*}
by selecting random $\{ Y'_k \}_{1 \leq k \leq 2L} \in \G_{p_3}$.
It publishes $PP = \big( (N, \G, \G_T, e), g = g_1, Y = g_3, \{ A_k = 
g^{\alpha^k} \}_{1 \leq k \leq L}, \lb \{ U_k \}_{1 \leq k \neq L+1 \leq 2L}, 
\Omega = e(g,Z)^{\alpha^{L+1}} \big)$.

\svs\noindent \tb{Challenge:} For a challenge set $S^*$, $\mc{B}$ creates a 
challenge ciphertext header and a session key as
	\begin{align*}
    CH^* = \big( 
	C_1 = X_1 R_1,~ 
	C_2 = (X_1 R_1)^{\sum_{j \in S^*} \gamma'_j} \big),~
    CK^* = \tb{H}(e(X_1 R_1, U_{L+1})).
	\end{align*}
It sets $CK_0^* = CK^*$ and $CK_1^* = RK$ by selecting a random $RK$. It flips 
a random coin $\mu \in \bits$ and gives $(CH^*, CK_\mu^*)$ to $\mc{A}$.

\svs\noindent \tb{Guess:} $\mc{A}$ outputs a guess $\mu'$. If $\mu = \mu'$,
then $\mc{B}$ outputs 1. Otherwise, it outputs 0.

\vs If $Z = Z_0 = X_2 Y_2$, then the simulation is the same as $\tb{H}_{\eta,0}$. 
If $Z = Z_1 = X_2 R_3 Y_2$, then it is the same as $\tb{H}_{\eta,1}$ by 
implicitly setting $r_\eta \equiv \text{dlog}(Z_1) \bmod p_2$. 
This completes our proof.
\end{proof}

\begin{lemma} \label{lem:dbe-ss-h1-h2}
No adversary can distinguish $\tb{H}_{\eta,1}$ from $\tb{H}_{k+1,0}$ since two
games are equal.
\end{lemma}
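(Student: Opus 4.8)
The plan is to prove this as a purely distributional (information-theoretic) step, the counterpart to the computational transition of Lemma~\ref{lem:dbe-ss-h0-h1}. Writing $k=\eta$ for the common game index, I will show that the adversary's views in $\tb{H}_{\eta,1}$ and $\tb{H}_{k+1,0}$ are \emph{identically} distributed, by observing that the only formal difference between the two lists is whether the $\eta$-th summand of each $\G_{p_2}$-exponent uses the master exponent $\alpha$ or a fresh exponent $a_\eta$, and that $\alpha \bmod p_2$ is itself a uniformly random value independent of everything else in the view. First I would write the two lists out side by side: the type-$(\eta,1)$ list gives $U_i = U'_i\, g_2^{\sum_{j=1}^{\eta-1} r_j a_j^{L+1-i}}\, g_2^{r_\eta \alpha^{L+1-i}}\, Y'_i$, while the type-$(\eta+1,0)$ list gives $U_i = U'_i\, g_2^{\sum_{j=1}^{\eta} r_j a_j^{L+1-i}}\, Y'_i = U'_i\, g_2^{\sum_{j=1}^{\eta-1} r_j a_j^{L+1-i}}\, g_2^{r_\eta a_\eta^{L+1-i}}\, Y'_i$. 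The two lists therefore agree term by term except that the $\eta$-th term carries $\alpha^{L+1-i}$ in $\tb{H}_{\eta,1}$ and $a_\eta^{L+1-i}$ in $\tb{H}_{k+1,0}$; all remaining randomness ($U'_i$, the $r_j,a_j$ for $j<\eta$, the key randomness $\gamma'_i$, and the ciphertext randomness) is sampled identically in both games.

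The core is a Chinese Remainder Theorem argument. Writing $\beta := \alpha \bmod p_2$, by CRT $\beta$ is uniform over $\Z_{p_2}$ and independent of $\alpha \bmod p_1$ and $\alpha \bmod p_3$. I would then trace every appearance of $\alpha$ in the adversary's view and verify that, apart from the $\eta$-th list term $r_\eta \alpha^{L+1-i}$, the exponent $\alpha$ enters only through $\G_{p_1}$-components: the public elements $A_k = g_1^{\alpha^k}$ and the $\G_{p_1}$-parts $u^{\alpha^i}$ of the $U'_i$ lie in $\G_{p_1}$; the session-key element $e(g^t, U_{L+1})$ collapses by orthogonality to a pairing of $\G_{p_1}$-parts and hence depends only on $\alpha \bmod p_1$; and the semi-functional challenge header contributes only the independent exponents $c, cd$. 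The one place that must be checked carefully is the public keys. Because they are built from the reparametrized value $\gamma'_i$ together with the public list elements — so that $V_i = g_1^{\gamma'_i} A_i^{-1}$ lies in $\G_{p_1}$, and the $\G_{p_2}$-part of each $V_{i,j} = U_j^{\gamma'_i} U_{j+i} Y'_{i,j}$ is $g_2^{c_i E_j + E_{j+i}}$ with $c_i := \gamma'_i \bmod p_2$ fresh and uniform and $E_k$ the $\G_{p_2}$-exponent of $U_k$ — the key's dependence on $\beta$ is confined entirely to the list exponents $E_j$ and $E_{j+i}$, that is, again only to their $\eta$-th terms $r_\eta \beta^{L+1-k}$. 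Consequently $\beta$ occurs in the entire view \emph{solely} through the $\eta$-th list terms, which are renamed consistently across the public parameters, all public keys, and the challenge.

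With this in hand the conclusion is immediate: rename $a_\eta := \beta = \alpha \bmod p_2$. Since $\beta$ is uniform and independent of all other randomness, $a_\eta$ has exactly the distribution of a freshly sampled exponent, and under this renaming the $\eta$-th term $r_\eta \alpha^{L+1-i}$ of $\tb{H}_{\eta,1}$ becomes the term $r_\eta a_\eta^{L+1-i}$ of $\tb{H}_{k+1,0}$ verbatim while every other component is literally unchanged; hence the two views are identically distributed and the games are equal. I expect the main obstacle to be precisely the bookkeeping of the previous paragraph: confirming that the reparametrization $\gamma_i = \gamma'_i - \alpha^i$ does \emph{not} smuggle an independent copy of $\beta$ into the keys' $\G_{p_2}$-components (the $-\alpha^i$ contribution being realized in $\G_{p_1}$ via $A_i^{-1}$ and $U_{j+i}$, with its $\G_{p_2}$ shadow already captured by the list's own $\eta$-th term), so that $\beta$ genuinely appears only through the renamed terms and the replacement is exact rather than merely statistically close.
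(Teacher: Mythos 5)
Your proposal is correct and follows the same route as the paper: both arguments hinge on the CRT observation that $\alpha \bmod p_2$ is uniform and independent of the $\G_{p_1}$-components of the adversary's view, so renaming $a_\eta := \alpha \bmod p_2$ turns the type-$(\eta,1)$ list into a type-$(\eta+1,0)$ list with an identical distribution. Your bookkeeping is in fact more thorough than the paper's one-line justification; the only slip is the remark that the session key ``depends only on $\alpha \bmod p_1$'' --- in the games $\tb{H}_{\eta,\cdot}$ the challenge header is already semi-functional, so the session key also sees the $\G_{p_2}$-exponent of $U_{L+1}$ --- but since, as you note, that exponent is renamed consistently along with the rest of the list, the conclusion stands.
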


\begin{proof}
In the game $\tb{H}_{\eta,1}$ of the Lemma \ref{lem:dbe-ss-h0-h1}, the
simulator builds an element $U_i$ of $UL$. If we implicitly sets $X_2 = u, R_2
= g_2^{r''}, R_3 = g_2^{r_\eta}, r_j = r'' r'_j$, and $a_\eta \equiv \alpha
\bmod p_2$, then we can rewrite $U_i$ as 
	\begin{align*}
	U_i 
	&= 	(X_2 R_3 Y_2)^{\alpha^i} (R_2 Y_1)^{\sum_{j=1}^{\eta-1} r'_j a_j^i} Y'_i 
	 = 	X_2^{\alpha^i} R_2^{\sum_{j=1}^{\eta-1} r'_j a_j^i} R_3^{\alpha^i} Y' \\
	&= 	u^{\alpha^i} (g_2^{r''})^{\sum_{j=1}^{\eta-1} r'_j a_j^i}
		(g_2^{r_\eta})^{a_\eta^i} Y'
	 = 	u^{\alpha^i} g_2^{\sum_{j=1}^{\eta-1} r_j a_j^i} g_2^{r_\eta a_\eta^i} Y'
	\end{align*}
since $a_\eta \equiv \alpha \bmod p_2$ is completely hidden by the CRT. 
This completes our proof.
\end{proof}

\begin{lemma} \label{lem:dbe-ss-g3-g4}
No adversary can distinguish $\tb{G}_3$ from $\tb{G}_4$ with a non-negligible 
advantage.
\end{lemma}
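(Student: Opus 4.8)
The plan is to treat this as a purely statistical step: I would argue that $\tb{G}_3$ and $\tb{G}_4$ differ only in the $\G_{p_2}$-component of the list $UL$, and that every other object handed to the adversary — the public parameters, all key pairs, the challenge ciphertext header, and the challenge session key — is produced by one and the same procedure applied to $UL$ together with auxiliary randomness ($\{\gamma'_i\}, t$, the $\G_{p_3}$-masks, $\tb{H}$) that is sampled independently of $UL$ in both games. By the data-processing inequality it then suffices to bound the statistical distance between the distribution of $UL$ in the two games. Since the $\G_{p_1}$-part $u^{\alpha^i}$ of each $U_i$ is identical in both games and the $\G_{p_3}$-part is a fresh uniform element in both, the whole problem collapses to comparing the vectors of $\G_{p_2}$-exponents.

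Concretely, in $\tb{G}_3 = \tb{H}_{2L+1,0}$ the $\G_{p_2}$-exponent of $U_i$ is $e_i = \sum_{j=1}^{2L} r_j a_j^{L+1-i}$, whereas in $\tb{G}_4$ it is an independent uniform $\delta_i$; here $r_1,\dots,r_{2L}$ and $a_1,\dots,a_{2L}$ are the fixed random exponents accumulated through the sub-games, all uniform and independent modulo $p_2$. Writing $\vect{e} = (e_1,\dots,e_{2L})^{\!\top}$ and $\vect{r} = (r_1,\dots,r_{2L})^{\!\top}$, I would observe that $\vect{e} = M\vect{r}$ for the $2L\times 2L$ matrix $M$ with entries $M_{ij} = a_j^{L+1-i}\bmod p_2$. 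The goal is to show $M$ is invertible over $\Z_{p_2}$ with overwhelming probability: whenever it is, $\vect{r}\mapsto M\vect{r}$ is a bijection of $\Z_{p_2}^{2L}$, so the uniformity of $\vect{r}$ forces $\vect{e}$ to be uniform on $\Z_{p_2}^{2L}$, matching the law of $(\delta_1,\dots,\delta_{2L})$ in $\tb{G}_4$.

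To handle $M$, I would note that it is a generalized Vandermonde matrix whose rows are indexed by the $2L$ distinct exponents $L+1-i\in\{1-L,\dots,L\}$. Factoring $a_j^{1-L}$ out of the $j$-th column reduces $M$ to the standard Vandermonde matrix with entries $a_j^{2L-i}$, giving
\[
\det M \;=\; \pm\Big(\prod_{j=1}^{2L} a_j^{1-L}\Big)\prod_{1\le i<j\le 2L}(a_i-a_j)\pmod{p_2},
\]
which is nonzero modulo $p_2$ exactly when all $a_j$ are units and pairwise distinct modulo $p_2$. As the $a_j$ are uniform and independent, a union bound caps the failure probability at $\binom{2L}{2}/p_2 + 2L/p_2 = O(L^2/p_2)$. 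Conditioned on this good event the distribution of $UL$, and hence the entire view, is identical in the two games, so the statistical distance is at most $O(L^2/p_2)$ and $\big|\Adv_{\mc{A}}^{G_3}-\Adv_{\mc{A}}^{G_4}\big| = O(L^2/p_2)$.

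The main obstacle is the determinant analysis of the generalized Vandermonde matrix $M$: because the row exponents $L+1-i$ become negative for $i>L+1$, one must first argue that the $a_j$ are units modulo $p_2$ before the classical Vandermonde determinant applies. A secondary point that needs care is that the $a_j$ accumulated across the D\'ej\`a Q hybrids really are independent and uniform modulo $p_2$ — this is exactly what the re-randomization of $\alpha\bmod p_2$ in Lemmas \ref{lem:dbe-ss-h0-h1} and \ref{lem:dbe-ss-h1-h2} is designed to guarantee.
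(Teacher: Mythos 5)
Your proposal is correct and follows essentially the same route as the paper: both reduce the step to the invertibility of the Vandermonde-type matrix linking $(r_1,\dots,r_{2L})$ to the $\G_{p_2}$-exponents of $UL$, concluding a one-to-one correspondence with the uniform $(\delta_1,\dots,\delta_{2L})$ except on a bad event of probability $O(L^2/p_2)$. The only difference is in bookkeeping: you take the exponents $a_j^{L+1-i}$ from the definition of \tb{UL-$(\eta,0)$} and therefore must factor out $a_j^{1-L}$ and additionally require the $a_j$ to be units modulo $p_2$, whereas the paper writes the matrix with positive powers $a_j^{i}$ (matching the form of $UL$ actually constructed in Lemmas \ref{lem:dbe-ss-h0-h1} and \ref{lem:dbe-ss-h1-h2}), so only distinctness of the $a_j$ is needed; both conventions give the same $O(L^2/p_2)$ bound.
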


\begin{proof}
In the game $\tb{G}_3$, the $G_{p_2}$ parts of all elements in $UL$ can be 
expressed as following matrix equation
	\begin{align*}
	\left( 
	\begin{array}{cccc}
	a_1      & a_2      & \cdots & a_{2L} \\
	a_1^2    & a_2^2    & \cdots & a_{2L}^2 \\
	\vdots   & \vdots   & \ddots & \vdots \\
	a_1^{2L} & a_2^{2L} & \cdots & a_{2L}^{2L} 
	\end{array}
	\right)
	\left(
	\begin{array}{c}
	r_1 \\ r_2 \\ \vdots \\ r_{2L}
	\end{array}
	\right)
	=
	\left(
	\begin{array}{c}
	\delta_1 \\ \delta_2 \\ \vdots \\ \delta_{2L}
	\end{array}
	\right)
	\bmod p_2.
	\end{align*}
Since the left matrix is the Vandermonde matrix, this matrix is invertible if
$a_1, \ldots, a_{2L}$ are distinct that can happen with probability $O(L^2 /
p_2)$. Thus there is one-to-one correspondence between $(r_1, \ldots, r_{2L})$ 
and $(\delta_1, \ldots, \delta_{2L})$. This completes the proof.
\end{proof}

\begin{lemma} \label{lem:dbe-ss-g4-g5}
No adversary can distinguish $\tb{G}_4$ from $\tb{G}_5$ with a non-negligible 
advantage.
\end{lemma}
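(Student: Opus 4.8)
The plan is to argue that $\tb{G}_4$ and $\tb{G}_5$ are \emph{statistically} close, by showing that the semi-functional challenge session key $CK_0^*$ in $\tb{G}_4$ is already negligibly far from uniform, so that replacing it with a truly random $RK$ in $\tb{G}_5$ is undetectable. Recall that throughout $\tb{G}_2$--$\tb{G}_4$ the challenge ciphertext is semi-functional, so $C_1^* = g^t g_2^c$ for a random $c \in \Z_N$, and the session key is formed as $CK_0^* = \tb{H}(e(C_1^*, U_{L+1}))$, where in $\tb{G}_4$ the element $U_{L+1} = u^{\alpha^{L+1}} g_2^{\delta_{L+1}} Y'_{L+1}$ carries an independent semi-functional exponent $\delta_{L+1} \in \Z_N$. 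First I would expand this pairing using the orthogonality of $\G_{p_1}, \G_{p_2}, \G_{p_3}$:
\[
e(C_1^*, U_{L+1}) = e(g^t, u^{\alpha^{L+1}}) \cdot e(g_2, g_2)^{c\,\delta_{L+1}} = \Omega^t \cdot e(g_2,g_2)^{c\,\delta_{L+1}},
\]
so the argument of $\tb{H}$ factors into a $\G_{T,p_1}$ part that the adversary may know and a fresh $\G_{T,p_2}$ part $e(g_2,g_2)^{c\,\delta_{L+1}}$.

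The crux is to show that $\delta_{L+1} \bmod p_2$ is uniform and independent of $\mc{A}$'s entire view, so that the $\G_{T,p_2}$ factor is uniform over $\G_{T,p_2}$ (using that $c \not\equiv 0 \bmod p_2$, which fails only with probability $1/p_2$). For this I would inspect every published element: the public parameters expose $\{U_k\}_{k \neq L+1}$ and $\Omega = e(g, U_{L+1})$, but since $g \in \G_{p_1}$ the value $\Omega$ equals $e(g, u^{\alpha^{L+1}})$ and reveals nothing about $\delta_{L+1}$ by orthogonality, while $\{U_k\}_{k\neq L+1}$ carries only the exponents $\{\delta_k\}_{k \neq L+1}$. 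Each user public key $UPK_i$ uses only $U_k$ with $1 \le k \le L < L+1$, and the semi-functional $C_2^*$ carries the unrelated exponent $cd$; hence $U_{L+1}$ enters no part of the view except through $\Omega$, whose $\G_{p_2}$-component is killed by the pairing. Because the $\delta_i$ are drawn independently when $UL$ is semi-functional, $\delta_{L+1}$ is therefore uniform over $\Z_{p_2}$ conditioned on everything $\mc{A}$ sees.

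Given this, $e(C_1^*, U_{L+1})$ has conditional min-entropy $\log_2 p_2$, since it ranges uniformly over a coset of $\G_{T,p_2}$ as $\delta_{L+1}$ varies. Because $\tb{H}$ is a published pairwise-independent hash into $\bits^\lambda$, I would finish by invoking the Leftover Hash Lemma: $(\tb{H}, \tb{H}(e(C_1^*, U_{L+1})))$ lies within statistical distance $\tfrac12\sqrt{2^\lambda/p_2}$ of $(\tb{H}, U_\lambda)$, which is negligible because $p_2$ is superpolynomially large. Thus $CK_0^*$ is statistically indistinguishable from the uniform $RK = CK_1^*$, so replacing it by random in $\tb{G}_5$ shifts $\mc{A}$'s view by at most $\tfrac12\sqrt{2^\lambda/p_2} + 1/p_2$, giving a negligible $|\Adv_{\mc{A}}^{G_4} - \Adv_{\mc{A}}^{G_5}|$.

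The main obstacle I anticipate is the independence bookkeeping in the second step: one must be scrupulous that no public key, no component of the semi-functional ciphertext, and no cross-term of $\Omega$ secretly correlates with $\delta_{L+1}$, since even a single leaked linear relation modulo $p_2$ would destroy the min-entropy required for the Leftover Hash Lemma. The index arithmetic---$U_{L+1}$ being the unique ``hole'' in the published list and every key touching only indices $\le L$---is exactly what guarantees this, so the argument rests on that structural feature of the scheme rather than on any computational assumption.
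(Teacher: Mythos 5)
Your proposal is correct and follows essentially the same route as the paper: observe that $U_{L+1}$ appears in the adversary's view only through $\Omega = e(g, U_{L+1})$ (whose $\G_{p_2}$ component is annihilated by pairing with $g \in \G_{p_1}$), conclude that $\delta_{L+1}$ contributes $\log p_2$ bits of fresh min-entropy to $e(C_1^*, U_{L+1})$, and apply the Leftover Hash Lemma with the pairwise-independent $\tb{H}$. Your write-up is in fact somewhat more careful than the paper's, explicitly tracking the $c \not\equiv 0 \bmod p_2$ condition and the independence bookkeeping across all published elements.
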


\begin{proof}
In the simulation of the game $\tb{G}_4$, the simulator generates all public
keys only using $\{ U_i \}_{1 \leq i \neq L+1 \leq 2L}$. That is, the
$\G_{p_2}$ part of $U_{L+1}$ is not revealed. Then the session key $CK_0^*$ 
is writted as
	\begin{align*}
	CK_0^* 
	= \tb{H}( e(C_1^*, U_{L+1}) )
	= \tb{H}( e(C_1^*, u^{\alpha^{L+1}} g_2^{\delta_{L+1}}) )
	= \tb{H}( e(C_1^*, u^{\alpha^{L+1}}) \cdot e(C_1^*, g_2^{\delta_{L+1}}) ).
	\end{align*}
Thus, $CK_0^*$ has additional $\log p_2$ bits of min-entropy from
$\delta_{L+1}$ as long as $\delta_{L+1}$ is not zero. Then, by the leftover
hash lemma, $\tb{H}(e(C_1^*, U_{L+1})$ is uniformly distributed since $\tb{H}$
is a pairwise independent hash function.
\end{proof}

\begin{corollary}[Adaptive Security]
The above DBE$_{AD}$ scheme is adaptively secure if the DBE$_{SS}$ scheme is 
semi-statically secure and the SKE scheme is OMI secure.
\end{corollary}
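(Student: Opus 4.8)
The plan is to prove the corollary by re-establishing the security guarantee of the Gentry--Waters transformation (Lemma~\ref{lem:conv-ss-to-ad}) for our concrete $\tb{DBE}_{AD}$ construction. Since the challenge header $CH^* = (CH_{SS,0}, CH_{SS,1}, CT_0, CT_1, z)$ is identical in the $\mu=0$ and $\mu=1$ worlds and these worlds differ only in whether $\mc{A}$ receives the session key $CK$ that was encapsulated inside $CT_0, CT_1$ or an independent random $RK$, it suffices to show that $CK$ is computationally hidden given $CH^*$ and the rest of $\mc{A}$'s view. I would establish this through a short hybrid $\tb{G}_0, \ldots, \tb{G}_4$ that progressively removes the dependence of $CT_0$ and $CT_1$ on $CK$: $\tb{G}_0$ is the real game; in $\tb{G}_1$ the side-$0$ semi-static session key $CK_{SS,0}$ is replaced by a fresh random key; in $\tb{G}_2$ the ciphertext $CT_0$ is replaced by an SKE encryption of a fixed message; and $\tb{G}_3, \tb{G}_4$ apply the same two steps symmetrically to $CK_{SS,1}$ and $CT_1$. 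In $\tb{G}_4$ both $CT_0, CT_1$ are independent of $CK$, so $CK$ and $RK$ are identically distributed from $\mc{A}$'s viewpoint and the advantage is $0$.

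The heart of the argument is the semi-static reduction for the step $\tb{G}_0 \to \tb{G}_1$ (and the symmetric $\tb{G}_2 \to \tb{G}_3$). A simulator $\mc{B}$ plays the semi-static game of Definition~\ref{def:dbe-ss-sec} against a $\tb{DBE}_{SS}$ challenger instantiated with $2L$ sub-users. Before setup, $\mc{B}$ assigns to each $i \in [L]$ a uniformly random bit marking one of the two sub-indices $\{2i-1, 2i\}$ as the \emph{challenger side}, and commits $\tilde{S}$ to be exactly the set of these $L$ challenger-side sub-indices; it self-generates the complementary \emph{simulator-side} keys so that it knows those private keys. A key-generation query on $i$ is answered by returning both sub-public-keys (the challenger side drawn from $\tilde{S}$, the simulator side generated locally); a key-corruption query on $i$ is answered by revealing the simulator-side private key together with the bit $u_i$ chosen so that $2i - u_i$ is the simulator-side sub-index. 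Because the side assignment is uniform, each revealed $u_i$ is uniform, matching the real distribution, and since corruptions are always served from the self-generated side, $\mc{B}$ never requests a private key from the semi-static challenger.

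At the challenge step, for $j \in S^*$ the simulator sets the GW bit $z_j$ so that $S_0 = \{2j - z_j\}_{j \in S^*}$ equals the set of challenger-side sub-indices of the users in $S^*$, a subset of $\tilde{S}$; it submits this as the semi-static challenge set and receives $(CH_{SS,0}, CK_{SS,0})$, where $CK_{SS,0}$ is real or random according to the challenger's hidden coin. The complementary set $S_1$ then consists of simulator-side sub-indices, so $\mc{B}$ produces $(CH_{SS,1}, CK_{SS,1})$ honestly and knows $CK_{SS,1}$. Picking $CK$ itself, $\mc{B}$ forms $CT_0 = \tb{SKE.Encrypt}(CK_{SS,0}, CK)$, $CT_1 = \tb{SKE.Encrypt}(CK_{SS,1}, CK)$, and the header $CH^*$; the two settings of the challenger's coin interpolate exactly between $\tb{G}_0$ and $\tb{G}_1$. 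The steps $\tb{G}_1 \to \tb{G}_2$ and $\tb{G}_3 \to \tb{G}_4$ are then direct reductions to OMI security of SKE: after randomizing $CK_{SS,0}$ (resp. $CK_{SS,1}$) the corresponding SKE key is uniform, independent, and used nowhere else, and $\mc{A}$ holds no private key for any $j \in S^* \subseteq KQ \setminus CQ$ and hence cannot recover it.

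The step I expect to be the main obstacle is justifying that this embedding is faithfully distributed, i.e. that $\mc{B}$ may commit $\tilde{S}$ before seeing $S^*$ and may impose the correlation $z_j = 1 - u_j$ without the adversary detecting it. The resolution is the defining feature of the GW transformation: committing one sub-index per user covers every possible $S^*$, and for a challenge (hence uncorrupted) user $j$ the bit $u_j$ is information-theoretically hidden, since both sub-public-keys $UPK_{SS,2j}, UPK_{SS,2j-1}$ are honestly generated with identical distribution and the private key that would pin down $u_j$ is never revealed. Consequently $z_j = 1 - u_j$ is uniform and independent from $\mc{A}$'s view, the simulation of the GW bits is perfect, and no statistical loss is incurred (the only loss is two $\Adv^{SS}$ and two $\Adv^{OMI}$ terms). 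A final point to verify is the symmetry of the two semi-static reductions: in the side-$1$ reduction the simulator side becomes side $0$, but this is harmless because each reduction stands alone and reproduces the uniform distribution of $\{u_i\}$ demanded by the game, and because after $\tb{G}_2$ the side-$0$ ciphertext no longer depends on the true $CK_{SS,0}$, freeing $\mc{B}$ to self-generate side $0$ while embedding the challenge on side $1$.
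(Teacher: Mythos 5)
Your proposal is correct and follows the same route as the paper: the paper proves this corollary simply by invoking the Gentry--Waters transformation (Lemma~\ref{lem:conv-ss-to-ad}), and your hybrid argument --- committing $\tilde{S}$ to one uniformly chosen sub-index per user, embedding the semi-static challenge on that side via the correlation $z_j$ with the hidden bit $u_j$, and then invoking OMI security to decouple $CT_0, CT_1$ from $CK$ --- is precisely the content of that lemma instantiated for $\tb{DBE}_{AD}$. You supply the details (including the correct observation that the side-$1$ reduction only goes through after $CT_0$ has been decoupled from the true $CK_{SS,0}$) that the paper leaves to the citation.
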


\noindent The proof of this corollary is easily obtained from the Lemma
\ref{lem:conv-ss-to-ad}.

\begin{corollary}[Active-Adaptive Security]
The above DBE$_{AD}$ scheme is also active-adaptively secure if the DBE$_{AD}$
scheme is adaptively secure.
\end{corollary}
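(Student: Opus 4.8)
The plan is to recognize that this corollary is precisely an instance of Lemma~\ref{lem:conv-ad-to-aa}, and to establish it by a direct black-box reduction showing that any active-adaptive adversary can be simulated by an adaptive adversary with identical advantage. The sole distinction between the two models is that an active-adaptive adversary $\mc{A}$ may additionally issue malicious corruption queries, which register an adversarially chosen public key $UPK_i$ for an index $i$ with $i \notin KQ$ and $i \notin MQ$.

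The key structural observation I would exploit is twofold. First, a malicious corruption query elicits no response from the challenger: by Definition~\ref{def:dbe-aa-sec} it merely adds $i$ to $MQ$ and stores $UPK_i$, so it conveys nothing to $\mc{A}$ that $\mc{A}$ did not itself supply. Second, the challenge constraint $S^* \subseteq KQ \setminus (CQ \cup MQ)$ forces the challenge set to be disjoint from $MQ$, so the maliciously registered keys never appear among $\{ (j, UPK_j) \}_{j \in S^*}$ and hence never influence the challenge encapsulation. Together these facts make the malicious keys inert: they affect neither any query answer nor the challenge ciphertext.

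Concretely, I would build an adaptive adversary $\mc{B}$ that runs $\mc{A}$ internally. $\mc{B}$ forwards $\mc{A}$'s key generation and key corruption queries verbatim to its own adaptive challenger and relays the replies. When $\mc{A}$ issues a malicious corruption query on $(i, UPK_i)$, $\mc{B}$ answers it itself by locally recording $i$ in $MQ$ and storing $UPK_i$, without touching its challenger. When $\mc{A}$ submits a challenge set $S^* \subseteq KQ \setminus (CQ \cup MQ)$, $\mc{B}$ forwards $S^*$ as its own challenge set---which is legal since $S^* \subseteq KQ \setminus CQ$ holds automatically once $S^*$ avoids $MQ$ as well---and relays the returned $(CH^*, CK_\mu^*)$ to $\mc{A}$. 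Finally $\mc{B}$ outputs $\mc{A}$'s guess $\mu'$.

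The main point to verify, rather than a genuine obstacle, is that this simulation is perfectly faithful: every query condition in $\mc{A}$'s active-adaptive game is at least as restrictive as the corresponding condition in $\mc{B}$'s adaptive game (for example, $\mc{A}$'s key generation additionally requires $i \notin MQ$), and the challenge distribution $\mc{B}$ relays is exactly the one $\mc{A}$ expects. Since $\mc{B}$'s view and winning event coincide with $\mc{A}$'s, we obtain $\Adv_{DBE,\mc{A}}^{AA}(\lambda) = \Adv_{DBE,\mc{B}}^{AD}(\lambda)$, which is negligible by the assumed adaptive security of the DBE$_{AD}$ scheme. This completes the reduction.
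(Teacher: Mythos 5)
Your proposal is correct and matches the paper's approach: the paper proves this corollary simply by invoking Lemma~\ref{lem:conv-ad-to-aa}, which is exactly the statement you identify. The only difference is that you additionally spell out the underlying reduction (malicious corruption queries elicit no response from the challenger and the challenge set must avoid $MQ$, so they can be simulated locally without affecting the view), whereas the paper delegates that argument entirely to the cited reference.
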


\noindent The proof of this corollary is also easily obtained from the Lemma
\ref{lem:conv-ad-to-aa}.

\section{Conclusion}

In this paper, we proposed a DBE scheme with constant size ciphertexts,
constant size private keys, and linear size public parameters, and proved the
semi-static security under static assumptions in composite-order bilinear
groups. We also showed that our DBE scheme can be converted an adaptively
secure DBE scheme by doubling the ciphertext size using the GW transformation.
Our DBE scheme is the first DBE scheme with linear size public parameters that
is proven under static assumptions instead of the $q$-Type assumption. An
interesting open problem is to convert our DBE scheme in composite-order
bilinear groups to a DBE scheme in prime-order bilinear groups.

\bibliographystyle{plain}
\bibliography{dbe-with-linear-param}

\begin{thebibliography}{10}

\bibitem{AbdallaKN07}
Michel Abdalla, Eike Kiltz, and Gregory Neven.
\newblock Generalized key delegation for hierarchical identity-based
  encryption.
\newblock In Joachim Biskup and Javier Lopez, editors, {\em Computer Security -
  ESORICS 2007}, volume 4734 of {\em Lecture Notes in Computer Science}, pages
  139--154. Springer, 2007.

\bibitem{AttrapadungT24}
Nuttapong Attrapadung and Junichi Tomida.
\newblock A modular approach to registered {ABE} for unbounded predicates.
\newblock In Leonid Reyzin and Douglas Stebila, editors, {\em Advances in
  Cryptology - {CRYPTO} 2024}, volume 14922 of {\em Lecture Notes in Computer
  Science}, pages 280--316. Springer, 2024.

\bibitem{BonehF01}
Dan Boneh and Matthew~K. Franklin.
\newblock Identity-based encryption from the {W}eil pairing.
\newblock In Joe Kilian, editor, {\em Advances in Cryptology - CRYPTO 2001},
  volume 2139 of {\em Lecture Notes in Computer Science}, pages 213--229.
  Springer, 2001.

\bibitem{BonehGW05}
Dan Boneh, Craig Gentry, and Brent Waters.
\newblock Collusion resistant broadcast encryption with short ciphertexts and
  private keys.
\newblock In Victor Shoup, editor, {\em Advances in Cryptology - CRYPTO 2005},
  volume 3621 of {\em Lecture Notes in Computer Science}, pages 258--275.
  Springer, 2005.

\bibitem{BonehZ14}
Dan Boneh and Mark Zhandry.
\newblock Multiparty key exchange, efficient traitor tracing, and more from
  indistinguishability obfuscation.
\newblock In Juan~A. Garay and Rosario Gennaro, editors, {\em Advances in
  Cryptology - {CRYPTO} 2014}, volume 8616 of {\em Lecture Notes in Computer
  Science}, pages 480--499. Springer, 2014.

\bibitem{ChampionHW25}
Jeffrey Champion, Yao{-}Ching Hsieh, and David~J. Wu.
\newblock Registered {ABE} and adaptively-secure broadcast encryption from
  succinct {LWE}.
\newblock Cryptology ePrint Archive, Report 2025/44, 2025.
\newblock \url{https://eprint.iacr.org/2025/044}.

\bibitem{ChampionW24}
Jeffrey Champion and David~J. Wu.
\newblock Distributed broadcast encryption from lattices.
\newblock In Elette Boyle and Mohammad Mahmoody, editors, {\em Theory of
  Cryptography - {TCC} 2024}, volume 15366 of {\em Lecture Notes in Computer
  Science}, pages 156--189. Springer, 2024.

\bibitem{ChaseM14}
Melissa Chase and Sarah Meiklejohn.
\newblock D{\'{e}}j{\`{a}} {Q}: Using dual systems to revisit q-type
  assumptions.
\newblock In Phong~Q. Nguyen and Elisabeth Oswald, editors, {\em Advances in
  Cryptology - {EUROCRYPT} 2014}, volume 8441 of {\em Lecture Notes in Computer
  Science}, pages 622--639. Springer, 2014.

\bibitem{FiatN93}
Amos Fiat and Moni Naor.
\newblock Broadcast encryption.
\newblock In Douglas~R. Stinson, editor, {\em Advances in Cryptology - CRYPTO
  '93}, volume 773 of {\em Lecture Notes in Computer Science}, pages 480--491.
  Springer, 1993.

\bibitem{GargLWW23}
Rachit Garg, George Lu, Brent Waters, and David~J. Wu.
\newblock Realizing flexible broadcast encryption: How to broadcast to a
  public-key directory.
\newblock In Weizhi Meng, Christian~Damsgaard Jensen, Cas Cremers, and Engin
  Kirda, editors, {\em {ACM} Conference on Computer and Communications
  Security, {CCS} 2023}, pages 1093--1107. {ACM}, 2023.

\bibitem{GargLWW24}
Rachit Garg, George Lu, Brent Waters, and David~J. Wu.
\newblock Reducing the {CRS} size in registered {ABE} systems.
\newblock In Leonid Reyzin and Douglas Stebila, editors, {\em Advances in
  Cryptology - {CRYPTO} 2024}, volume 14922 of {\em Lecture Notes in Computer
  Science}, pages 143--177. Springer, 2024.

\bibitem{GargHMR18}
Sanjam Garg, Mohammad Hajiabadi, Mohammad Mahmoody, and Ahmadreza Rahimi.
\newblock Registration-based encryption: Removing private-key generator from
  {IBE}.
\newblock In Amos Beimel and Stefan Dziembowski, editors, {\em Theory of
  Cryptography - {TCC} 2018}, volume 11239 of {\em Lecture Notes in Computer
  Science}, pages 689--718. Springer, 2018.

\bibitem{GargKPW24}
Sanjam Garg, Dimitris Kolonelos, Guru{-}Vamsi Policharla, and Mingyuan Wang.
\newblock Threshold encryption with silent setup.
\newblock In Leonid Reyzin and Douglas Stebila, editors, {\em Advances in
  Cryptology - {CRYPTO} 2024}, volume 14926 of {\em Lecture Notes in Computer
  Science}, pages 352--386. Springer, 2024.

\bibitem{GayKW18}
Romain Gay, Lucas Kowalczyk, and Hoeteck Wee.
\newblock Tight adaptively secure broadcast encryption with short ciphertexts
  and keys.
\newblock In Dario Catalano and Roberto~De Prisco, editors, {\em Security and
  Cryptography for Networks - {SCN} 2018}, volume 11035 of {\em Lecture Notes
  in Computer Science}, pages 123--139. Springer, 2018.

\bibitem{GentryW09}
Craig Gentry and Brent Waters.
\newblock Adaptive security in broadcast encryption systems (with short
  ciphertexts).
\newblock In Antoine Joux, editor, {\em Advances in Cryptology - EUROCRYPT
  2009}, volume 5479 of {\em Lecture Notes in Computer Science}, pages
  171--188. Springer, 2009.

\bibitem{HohenbergerLWW23}
Susan Hohenberger, George Lu, Brent Waters, and David~J. Wu.
\newblock Registered attribute-based encryption.
\newblock In Carmit Hazay and Martijn Stam, editors, {\em Advances in
  Cryptology - {EUROCRYPT} 2023}, volume 14006 of {\em Lecture Notes in
  Computer Science}, pages 511--542. Springer, 2023.

\bibitem{HsiehWW25}
Yao-Ching Hsieh, Brent Waters, and David~J. Wu.
\newblock A generic approach to adaptively-secure broadcast encryption in the
  plain model.
\newblock In Serge Fehr and Pierre-Alain Fouque, editors, {\em Advances in
  Cryptology - {EUROCRYPT} 2025}, volume 15603 of {\em Lecture Notes in
  Computer Science}, pages 336--365. Springer, 2025.

\bibitem{KolonelosMW23}
Dimitris Kolonelos, Giulio Malavolta, and Hoeteck Wee.
\newblock Distributed broadcast encryption from bilinear groups.
\newblock In Jian Guo and Ron Steinfeld, editors, {\em Advances in Cryptology -
  {ASIACRYPT} 2023}, volume 14442 of {\em Lecture Notes in Computer Science},
  pages 407--441. Springer, 2023.

\bibitem{LewkoW10}
Allison~B. Lewko and Brent Waters.
\newblock New techniques for dual system encryption and fully secure {HIBE}
  with short ciphertexts.
\newblock In Daniele Micciancio, editor, {\em Theory of Cryptography - TCC
  2010}, volume 5978 of {\em Lecture Notes in Computer Science}, pages
  455--479. Springer, 2010.

\bibitem{NaorNL01}
Dalit Naor, Moni Naor, and Jeffery Lotspiech.
\newblock Revocation and tracing schemes for stateless receivers.
\newblock In Joe Kilian, editor, {\em Advances in Cryptology - CRYPTO 2001},
  volume 2139 of {\em Lecture Notes in Computer Science}, pages 41--62.
  Springer, 2001.

\bibitem{Waters09}
Brent Waters.
\newblock Dual system encryption: Realizing fully secure {IBE} and {HIBE} under
  simple assumptions.
\newblock In Shai Halevi, editor, {\em Advances in Cryptology - CRYPTO 2009},
  volume 5677 of {\em Lecture Notes in Computer Science}, pages 619--636.
  Springer, 2009.

\bibitem{Wee16}
Hoeteck Wee.
\newblock D{\'{e}}j{\`{a}} {Q}: Encore! un petit {IBE}.
\newblock In Eyal Kushilevitz and Tal Malkin, editors, {\em Theory of
  Cryptography - {TCC} 2016-A}, volume 9563 of {\em Lecture Notes in Computer
  Science}, pages 237--258. Springer, 2016.

\bibitem{WuQZD10}
Qianhong Wu, Bo~Qin, Lei Zhang, and Josep Domingo{-}Ferrer.
\newblock Ad hoc broadcast encryption.
\newblock In Ehab Al{-}Shaer, Angelos~D. Keromytis, and Vitaly Shmatikov,
  editors, {\em {ACM} Conference on Computer and Communications Security, {CCS}
  2010}, pages 741--743. {ACM}, 2010.

\bibitem{ZhuZGQ23}
Ziqi Zhu, Kai Zhang, Junqing Gong, and Haifeng Qian.
\newblock Registered {ABE} via predicate encodings.
\newblock In Jian Guo and Ron Steinfeld, editors, {\em Advances in Cryptology -
  {ASIACRYPT} 2023}, volume 14442 of {\em Lecture Notes in Computer Science},
  pages 66--97. Springer, 2023.

\end{thebibliography}

\end{document}